\documentclass[11pt,letterpaper]{article}
\usepackage[margin=1in]{geometry}

\usepackage{ifpdf}
\ifpdf
    \usepackage[pdftex]{graphicx}
    \usepackage[update]{epstopdf}
\else
	\usepackage{graphicx}
\fi
\usepackage{wrapfig,bbm}
\usepackage{enumitem,color}
\usepackage{multirow}
\usepackage{amsthm}

\newtheorem{theorem}{Theorem}

\newtheorem{lemma}{Lemma}

\usepackage[mathscr]{euscript}
\usepackage{eufrak}
\usepackage{bbold}
\usepackage{array}
\usepackage[cmex10]{amsmath}
\usepackage{amssymb}
\usepackage{amstext}
\usepackage{cite,setspace}
\usepackage{algorithm2e}
\usepackage{url}
\newtheorem{definition}{\textbf{Definition}}

\DeclareGraphicsExtensions{.pdf,.png,.jpg,.eps}

\DeclareMathOperator{\lcm}{\textbf{lcm}}
\DeclareMathOperator{\gcd1}{\textbf{gcd}}

\usepackage{mathtools}
\DeclarePairedDelimiter{\ceil}{\lceil}{\rceil}

 % No op here. Customize it for different styles.

%\newcommand{\Fb}{\mathbb{F}}
\newcommand{\Fb}{\mathsf{F}}

\newcommand{\Fc}{\mathcal{F}}
\newcommand{\Gc}{\mathcal{G}}

\newcommand{\Nc}{\mathcal{N}}
\newcommand{\Qc}{\mathcal{Q}}
\newcommand{\Jc}{\mathcal{J}}
\newcommand{\Sc}{\mathcal{S}}
\newcommand{\Tc}{\mathcal{T}}
\newcommand{\Tcc}{\tilde{\mathcal{T}}}

\newcommand{\Xc}{\mathcal{X}}

\newcommand{\MDS}{\mathrm{MDS}}

%\allowdisplaybreaks
    
%\addtolength{\topmargin}{0.1in}
    
\begin{document}

\title{Capacity-Achieving Private Information Retrieval Codes from MDS-Coded Databases with Minimum Message Size}
%\thanks{The work of R. Zhou and C. Tian was supported in part by CCF-18-32309.}

%\author{
%\IEEEauthorblockN{Ruida Zhou, Chao Tian, Tie Liu}
%\IEEEauthorblockA{Department of Electrical and Computer Engineering\\
%Texas A\&M University\\
%{\sffamily \{ruida,chao.tian,tieliu\}@tamu.edu}}
% \and
%\IEEEauthorblockN{Hua Sun}
%\IEEEauthorblockA{Department of Electrical Engineering\\
%University of North Texas\\
%{\sffamily hua.sun@unt.edu}}
%}

\author{Ruida Zhou, Chao Tian, Hua Sun, and Tie Liu}

\maketitle

\begin{abstract}
We consider constructing capacity-achieving linear codes with minimum message size for private information retrieval (PIR) from $N$ non-colluding databases, where each message is coded using maximum distance separable (MDS) codes, such that it can be recovered from accessing the contents of any $T$ databases. It is shown that the minimum message size (sometimes also referred to as the sub-packetization factor) is significantly, in fact exponentially, lower than previously believed. More precisely, when $K>T/\gcd1(N,T)$ where $K$ is the total number of messages in the system and $\gcd1(\cdot,\cdot)$ means the greatest common divisor, we establish, by providing both novel code constructions and a matching converse, the minimum message size as $\lcm(N-T,T)$, where $\lcm(\cdot,\cdot)$ means the least common multiple. On the other hand, when $K$ is small, we show that it is in fact possible to design codes with a message size even smaller than $\lcm(N-T,T)$. 
\end{abstract}

\section{Introduction}

The problem of private information retrieval (PIR), since its introduction \cite{PIRfirstjournal}, has attracted significant attention from researchers in the fields of theoretical computer science, cryptography,  information theory, and coding theory. In the classical PIR model, a user wishes to retrieve one of the $K$ available messages, from $N$ non-colluding databases, each of which has a copy of these $K$ messages. User privacy needs to be preserved during message retrieval, which requires that the identity of the desired message not be revealed to any single database. 
To accomplish the task efficiently, good codes should be designed to download the least amount of data per-bit of desired message, the inverse of which is referred to as the capacity of the PIR system. This capacity problem in the classical setting was settled recently \cite{Sun_Jafar_PIR}. 

In practical systems, the databases may suffer from failures, and are also constrained on the storage space. Erasure codes can be used to improve both storage efficiency and failure resistance. This consideration motivated the investigation of PIR from MDS-coded databases \cite{shah2014one,banawan2018capacity,Tajeddine_Rouayheb,jingke2017subScienceChina}, with coding parameter $(N,T)$, i.e., the messages can be recovered by accessing any $T$ databases. The capacity of PIR from MDS-coded databases (MDS-PIR) was characterized \cite{banawan2018capacity} as
\begin{align}
    C = \left(1+ \frac{T}{N}+ \cdots + \left( \frac{T}{N} \right)^{K-1} \right)^{-1}.\label{eqn:capacityformula}
\end{align} 

In a given code, the smallest required number of symbols in each message is called the message size $L$ (sometimes also referred to as the sub-packetization factor), which is an important factor impacting the practicality and efficiency of the code. A large message size implies that the message (or the data file in practice systems) needs to be large for such code to be applicable, which significantly restricts the possible usage scenarios. Moreover, a large message size also usually implies that the encoding and the decoding functions are more complex, which not only requires more engineering efforts to implement but also hinders the efficiency of the system operation. From a theoretical point of view, a code with a smaller message size usually implies a more transparent  coding structure, which can be valuable for related problems; see, e.g., \cite{Lin19weakly} for such an example. Thus codes with a smaller message size are highly desirable in both theory and practice.

The capacity-achieving code given in \cite{banawan2018capacity} requires $L=TN^K$, which can be extremely large for a system with even a moderate number of messages. The problem of reducing the message size of capacity-achieving codes was recently considered by Xu and Zhang \cite{jingke2017subScienceChina}, and it was shown that under the assumption that all answers are of the same length, the message size must satisfy $L\geq T (N/\gcd1(N,T))^{K-1}$. These existing results may have left the impression that capacity-achieving codes would necessitate a message size exponential in the number of messages. 

In this work, we show that the minimum message size for capacity-achieving PIR codes can in fact be significantly smaller than previously believed, by providing capacity-achieving linear codes with message size $L=\lcm(N-T,T)$. 
Two linear code constructions, referred to as Construction-A and Construction-B, respectively, are given. The two constructions have the same download cost and message size, however Construction-B has a better upload cost (i.e., a lower communication cost for the user to send the queries), at the expense of being slightly more sophisticated than Construction-A. The key difference between the two proposed constructions and existing codes in the literature is that the proposed codes reduce the reliance on the so-called variety symmetry \cite{tian2018capacity}, which should be distinguished from the asymmetry discussed in \cite{banawan2019asymmetry}, and the answers may be of different lengths\footnote{The download cost is measured in this work as the expected number of downloaded symbols (over all random queries), which is in line with the prevailing approach in the literature when PIR capacity is concerned \cite{Sun_Jafar_PIR,banawan2018capacity}, where the download cost is viewed as being equivalent to certain entropy term. However, if we instead measure the download cost by the maximum number of downloaded symbols (among all possible queries), which was the alternative and more stringent approach used in \cite{sun2017optimal} and \cite{jingke2017subScienceChina}, then the optimal minimum message sizes will need to be much larger. In a sense, having the more stringent requirement that the maximum download cost needs to match the PIR capacity forces certain symmetrization to be built in the code, which necessitates a significant increase in the message size.}. We further show that this is in fact the minimum message size when $K>T/\gcd1(N,T)$, the proof of which requires a careful analysis of the converse proof of the information-theoretic MDS-PIR capacity. Finally, we show that, when $K$ is small, it is in fact possible to design codes with a message size even smaller than $\lcm(N-T,T)$. 

The code constructions and converse proof reflect a reverse engineering approach which further extends \cite{tian2018capacity,Tian:16Computer}. Particularly, in \cite{tian2018capacity}, a similar approach was used to tackle the canonical PIR setting with replicated databases and a capacity-achieving PIR code with the minimum message size and upload cost was discovered, and in the current work the databases are instead MDS-coded. The analysis technique and the code construction in the current work, however, are considerably more involved due to the additional coding requirements and the several integer constraints.

The rest of the paper is organized as follows. In Section~\ref{sec:sys}, a formal problem definition is given. Construction-A and Construction-B are then given in Section~\ref{sec:cod-A} and Section~\ref{sec:cod-B}, respectively, where the correctness and performance are also proved and analyzed. The optimality of message size is established by first identifying several critical properties of capacity-achieving codes in Section~\ref{sec:pro}, then lower-bounding the minimum message size when $K>T/\gcd1(N,T)$ in Section \ref{sec:lowerbound}. A special code is given in Section \ref{sec:small} to show that when $K\leq T/\gcd1(N,T)$, the message size can be even lower than $\lcm(N-T, T)$. Finally, Section \ref{sec:con} concludes the paper. Several technical proofs are relegated to the Appendices.

\section{System Model}
\label{sec:sys}

There are a total of $K$ mutually independent messages $W^0, W^1, \ldots, W^{K-1}$ in the system. Each message is uniformly distributed over $\Xc^L$, {\em i.e.,} the set of length-$L$ sequences in the finite alphabet $\Xc$. All the messages can be collected and written as a single length-$LK$ row vector $W^{0:K-1}$. Each message is MDS-coded and then distributed to $N$ databases, such that from any $T$ databases, the messages can be fully recovered. Since the messages are $(N,T)$ MDS-coded, it is without loss of generality to assume that $L=M\cdot T$ for some integer $M$.

When a user wishes to retrieve a particular message $W^{k^*}$, $N$ queries $Q_{0:N-1}^{[k^*]}=(Q_0^{[k^*]},\ldots,Q_{N-1}^{[k^*]})$ are sent to the databases, where $Q_n^{[k^*]}$ is the query for database-$n$. The retrieval needs to be information theoretically private, i.e., any database is not able to infer any knowledge as to which message is being requested. For this purpose, a random key $\Fb$ in the set $\Fc$ is used together with the desired message index $k^*$ to generate the set of queries $Q_{0:N-1}^{[k^*]}$. Each query $Q_n^{[k^*]}$ belongs to the set of allowed queries for database-$n$, denoted as $\Qc_n$. After receiving query $Q_{n}^{[k^*]}$, database-$n$ responds with an answer $A_n^{[k^*]}$. Each symbol in the answers also belongs to the finite field $\Xc$, and the answers may have multiple (and different numbers of) symbols. Using the answers $A^{[k^*]}_{0:N-1}$ from all $N$ databases, together with $\Fb$ and $k^*$, the user then reconstructs $\hat{W}^{k^*}$. 

A more rigorous definition of the linear information retrieval process we consider in this work can be specified by a set of coding matrices and functions as follows. For notational simplicity, we denote the cardinality of a set $\mathcal{A}$ as $|\mathcal{A}|$. 
\begin{definition}\label{def:fun}
A linear private information retrieval code from linearly MDS-coded databases (a linear MDS-PIR code) consists of the following coding components: 
\begin{enumerate}
\item A set of MDS encoding matrices:
\begin{align}
&\tilde{G}_n := \mathrm{diag}(\tilde{G}_n^0, \tilde{G}_n^1, \ldots, \tilde{G}_n^{K-1}), \notag \\
&\qquad\qquad\qquad\qquad  n \in \{0,1,\ldots,N-1\}, 
\end{align}
where $\tilde{G}_n^k $, $k \in \{0,1,\ldots,K-1\}$ is an $L \times M$ matrix in $\mathcal{X}$ for encoding message $W^k$, i.e., each message is not mixed with other messages during storage, and 
each $\tilde{G}_n$ encodes the messages into the information to be stored at database-$n$, denoted as $V_n=W^{0:K-1} \cdot \tilde{G}_n$;

\item A set of MDS decoding recovery functions: 
\begin{eqnarray}
\Psi_{\mathcal{T}}: \Xc^{LK} \rightarrow \Xc^{LK}, 
\end{eqnarray}
for each $\mathcal{T}\subseteq  \{0,1,\ldots,N-1\}$ such that $|\mathcal{T}|=T$, whose outputs are denoted as $\tilde{W}_{\mathcal{T}}^{0:K-1}=\Psi_{\mathcal{T}}(\{V_n:n\in \mathcal{T}\})$;
\item{A query function
\begin{eqnarray*}
&\phi_n: \{0,1,\ldots,K-1\} \times \Fc \rightarrow \Qc_n, \notag \\ 
&\qquad\qquad\qquad\qquad  n \in \{0,1,\ldots,N-1\},
\end{eqnarray*}
i.e., for retrieving message $W^{k^*}$,  the user sends the query $Q_{n}^{[k^*]} = \phi_n(k^*, \Fb)$ to database-$n$;}
\item{An answer length function
\begin{eqnarray}
\ell_n: \Qc_n \rightarrow \{0, 1, \ldots \},~\quad n \in \{0,1,\ldots,N-1\},
\end{eqnarray}
i.e., the length of the answer from each database, a non-negative integer, is a deterministic function
of the query, but not the particular realization of the messages;}
\item{
An answer generating matrix
\begin{align}
\hat{G}_n^{(q_n)} \in \Xc^{MK \times \ell_n}, \quad  q_n \in \Qc_n, n \in \{0,1,\ldots,N-1\},
\end{align}
i.e., the answer $A_n^{[k^*]}=A_n^{(q_n)} := V_n \cdot \hat{G}_n^{(q_n)}$, when $q_n=Q_n^{[k^*]}$ is the  query received by   database-$n$;}
\item{
A reconstruction function
\begin{eqnarray}
\psi: \prod_{n=0}^{N-1} \Xc^{\ell_n} \times \{0,1,\ldots,K-1\} \times \Fc \rightarrow \Xc^{L},
\end{eqnarray}
i.e., after receiving the answers, the user reconstructs the message as $\hat{W}^{k^*} = \psi(A_{0:N-1}^{[k^*]}, k^*, \Fb)$.
}
\end{enumerate}
These functions satisfy the following three requirements:
\begin{enumerate}
    \item {\bf MDS recoverable:} For any $\mathcal{T}\subseteq \{0,1,\ldots,N-1\} $ such that $|\mathcal{T}|=T$, we have $ \tilde{W}_{\mathcal{T}}^{0:K-1}=W^{0:K-1}$. 
    \item {\bf Retrieval correctness:} For any $k^* \in \{0,1,\ldots,K-1\} $, we have $\hat{W}^{k^*} = W^{k^*}.$
    \item {\bf Privacy:} For every $k, k' \in\{0,1,\ldots,K-1\} $, $n \in\{0,1,\ldots,N-1\} $ and $q \in \Qc_n$, 
    \begin{eqnarray}
    \mathbf{Pr}(Q_n^{[k]} = q) = \mathbf{Pr}(Q_n^{[k']} = q).
    \end{eqnarray}
\end{enumerate}
\end{definition}

Note that 
$Q_{n}^{[k^*]}$ is in fact a random variable, since $\Fb$ is the random key. It follows that even when the messages are viewed as deterministic, $A^{[k^*]}_{n} $ is still not deterministic. In contrast, for any specific query realization $Q_{n}^{[k^*]}=q_n$, the corresponding answer $A^{(q_n)}_n$ is deterministic when the messages are viewed as  deterministic. The distinction between $A^{[k^*]}_{n}$ and $A^{(q_n)}_n$ is indicated by the  bracket ${[\cdot]}$ and the parenthesis ${(\cdot)}$. 

In order to measure the performance of an MDS-PIR code, we consider the following two metrics, with the focus on minimizing the latter while keeping the former optimal: 
\begin{enumerate}
\item The retrieval rate, which is defined as 
\begin{eqnarray}
R:=\frac{L}{ \sum_{n=0}^{N-1} \mathbb{E} (\ell_n) }.\end{eqnarray}
This is the number of bits of desired message information that can be privately retrieved per bit of downloaded data. It was shown \cite{banawan2018capacity} that the maximum retrieval rate, i.e., the capacity of such MDS-PIR systems, is as given in (\ref{eqn:capacityformula}).
\item The message size $L$, which is the number of symbols to represent each individual message. This quantity should be minimized, because in practical applications, a smaller message size implies a more versatile code.
\end{enumerate}

A third metric, the upload cost, is also of interest in practical systems (also particularly in computer science literature, e.g., \cite{PIRfirstjournal}), although it is not our main focus in this work. The upload cost can be defined as 
\begin{align}
\sum_{n=0}^{N-1} \log_2  |\mathcal{Q}_n|,
\end{align}
which is roughly the total number of bits that the user needs to send to the servers during the query phase.

We will need several more parameters before proceeding. Define $p:=\gcd1(N,T)$, then 
\begin{eqnarray}
N-T = p\cdot r,\qquad \qquad T = p\cdot s, \label{eqn:prs}
\end{eqnarray}
for some positive integers $r$ and $s$, which are co-prime. 

\section{New MDS-PIR Code: Construction-A} \label{sec:cod-A}

In this section, we provide the first MDS-PIR code construction with message length $L=\lcm (N-T,T)$, which we refer to as Construction-A. 

\subsection{The Coding Components of Construction-A}

Each message $W^k$ can be divided into $M$ sub-messages, denoted as $W^{k}=(W^{k,0},W^{k,1},\ldots,W^{k,M-1})$, and each sub-message contains $T$ symbols in the alphabet $\Xc$. The construction relies on two novel ingredients: a new indexing on the key (query)  and the introduction of pseudo code symbols. The two elements were not present in other constructions in the literature such as \cite{Sun_Jafar_PIR,banawan2018capacity,jingke2017subScienceChina}. A simpler version of these two ingredients were first used in \cite{tian2018capacity} for replicated databases. The generalized version used in the current work requires a more complex translation between indexing and the answer, as well as the introduction of more than one pseudo code symbol. 

The \textit{first novel ingredient} in the construction, which is different from previous ones in the literature, is the random key $\Fb = (\Fb_0,\Fb_1,\ldots, \Fb_{K-1})$, which is a length-$K$ vector uniformly distributed in the set 
\begin{align}
&\Fc:= \bigg{\{}(f_0,\ldots,f_{K-1}) \in \{0,\ldots,r+s-1\}^K \notag \\
& \qquad \qquad \qquad\qquad \qquad \qquad \bigg{|} \left(\sum_{k=0}^{K-1}f_k\right)_{r+s}= 0  \bigg{\}}, \label{eqn:Fc}
\end{align} 
where $(\cdot)_{r+s}$ indicates modulo $(r+s)$. In this code construction, we need to first choose a (in fact, any) linear $(N,T)$-MDS code $\mathbb{C}$, in the alphabet $\Xc$ as our base code. There are many known techniques to construct such codes, such as Reed-Solomon codes and Cauchy matrix based constructions; see \cite{LinCostello:book}.  The coding functions can then be given as follows:

\begin{figure}[bt]
\centering
\includegraphics[width=0.75\textwidth]{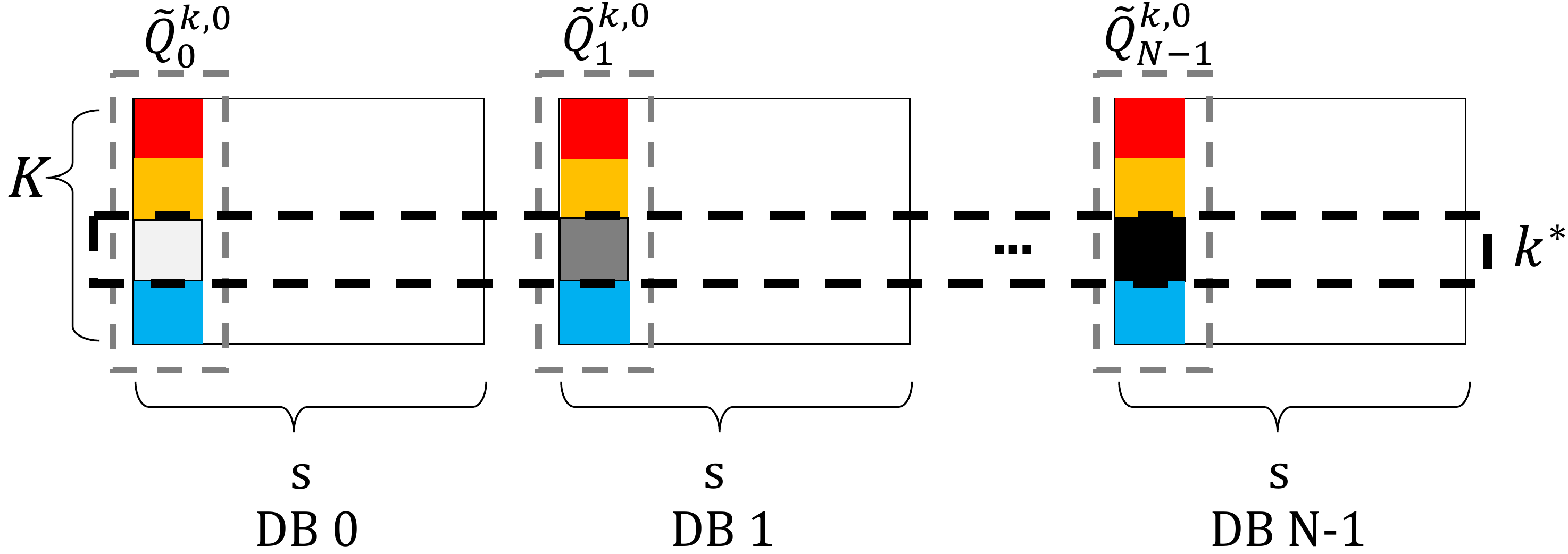}
\caption{The queries to different databases are illustrated. The parts of the queries related to the interference signals are of the same pattern. As a consequence, the induced interference signals in the answers will have the same pattern, and $T$ of them can be isolated to remove the interference signals in all the answers. \label{fig:interferencecancel}}
\end{figure}

\begin{enumerate}
    \item Each sub-message $W^{k,m}$, $m=0,1,\ldots, r-1$ and $k=0,1,\ldots,K-1$, is encoded by $\mathbb{C}$ into $N$ coded symbols $V^{k,m}_{0:N-1}=(V^{k,m}_{0},V^{k,m}_{1},\ldots,V^{k,m}_{N-1})$, with $V^{k,m}_{n}= W^{k,m}\cdot\tilde{G}^*_n \in \Xc$ placed at database-$n$, where $\tilde{G}^*_n$ is the $n$-th column of the $T\times N$ generator matrix of code $\mathbb{C}$ operated on each sub-message, which produces the stored information at database-$n$.  
  \item The $\MDS$ decoding function is obvious which is naturally induced by that of $\mathbb{C}$.
    \item For any $n \in \{0,1,\ldots, N-1\}$, the query generating function produces a length-$K$ column vector as
        \begin{align}
    &\phi_{n}(k^*,\Fb) =Q_{n}^{[k^*]}=\nonumber\\
     & (\Fb_0,\ldots,
    \Fb_{k^*-1},\left(\Fb_{k^*}+n\right)_{r+s},\Fb_{k^*+1},\ldots,\Fb_{K-1})^T.\label{eqn:queries}
    %\left(Q_{n, i}^{[k^*]}(0), Q_{n, i}^{[k^*]}(1),\ldots, Q_{n, i}^{[k^*]}(K-1) \right), \quad i = 0, 1, \ldots, s-1,
    \end{align}
 
    \item Database-$n$ first produces a $K\times s$ query matrix $\tilde{Q}_{n}$ 
    \begin{align} 
    \tilde{Q}_{n}=\left(Q_{n}^{[k^*]}\cdot \mathbf{1}^T_s+\mathbf{1}_K\cdot [0,1,\ldots,s-1]\right)_{r+s},
    \end{align}
    where $\mathbf{1}_t$ is the all-one column vector of length $t$, and $^T$ indicates matrix transpose; the element of $\tilde{Q}_{n}$ on the $k$-th row and $i$-th column is denoted as $\tilde{Q}_{n}^{k,i}$. The query length function is then defined as:  
    \begin{eqnarray}
    \ell_n = \sum_{i=0}^{s-1} \mathbb{1} \left(\min_{k=0,1,\ldots,K-1} \tilde{Q}_n^{k,i}<r\right),\label{eqn:nonzeropos}
    \end{eqnarray}
    where $\mathbbm{1}(\cdot)$ is the indicator function, i.e., $\ell_n$ is the number of columns in $\tilde{Q}_{n}$ which has an element less than $r$. 
    
    \item  \textit{The second novel ingredient}, which is different from previous ones in the literature, is the introduction of pseudo code symbols and pseudo message symbols in the sub-messages: $V_n^{k,i}=W^{k,i}=0$ for $i\geq r$. For $n \in \{0,1,\ldots,N-1\}$, an intermediate answer vector $\tilde{A}_{n}^{[k^*]}$ of length-$s$ is formed as
    \begin{align}
    &\tilde{A}_{n}^{[k^*]} :=\left(\bigoplus_{k=0}^{K-1}V^{k,\tilde{Q}_{n}^{k,0}}_n, \bigoplus_{k=0}^{K-1}V^{k,\tilde{Q}_{n}^{k,1}}_n,\right.\nonumber\\
    &\left.\qquad\qquad\qquad\qquad\qquad\ldots,\bigoplus_{k=0}^{K-1}V^{k,\tilde{Q}_{n}^{k,s-1}}_n\right),
    \end{align}
each component of which is the finite field addition of some components of the vector $V_n$ that are indicated by the corresponding column of $\tilde{Q}_{n}$. The eventual answer ${A}_{n}^{[k^*]}$ of length $\ell_n$ is formed by concatenating the components of  $\tilde{A}_{n}^{[k^*]}$ which are not constantly zero, i.e., those corresponding to the positions indicated in (\ref{eqn:nonzeropos}).

    \item For any $i\in \{0,1,\ldots,s-1\}$, define the interference database set $\Tc_i := \{n|~\tilde{Q}_{n}^{k^*,i}\geq r \}$. The $i$-th component of $\tilde{A}_{n}^{[k^*]}$, $n\in \mathcal{T}_i$, can be written as 
    \begin{align*}
&    \bigoplus_{k=0}^{K-1} V_n^{k,\tilde{Q}_{n}^{k,i}}=\bigoplus_{k=0}^{K-1} \left(W^{k,\tilde{Q}_{n}^{k,i}}\cdot\tilde{G}^*_n\right) \nonumber\\
    &= \left(\bigoplus_{k=0}^{K-1}W^{k,\tilde{Q}_{n}^{k,i}}\right)\cdot\tilde{G}^*_n=\bar{W}^{[k^*],i}\cdot\tilde{G}^*_n, ~ n\in\Tc_i,
    \end{align*}     
    where the length-$T$ row vector  $\bar{W}^{[k^*],i}$ is defined as
    \begin{align*}
   \bar{W}^{[k^*],i} := \left(\bigoplus_{k=0}^{k^*-1} W^{k,\tilde{Q}_{n}^{k,i}}\oplus \bigoplus_{k=k^*+1}^{K-1} W^{k,\tilde{Q}_{n}^{k,i}}\right).
    \end{align*}
    Note that $\bar{W}^{[k^*],i}$ is not a function of $n$, since $\tilde{Q}_{n}^{k,i}=\tilde{Q}_{n'}^{k,i}$ unless $k=k^*$. Thus as long as $|\Tc_i|\geq T$, the vector $\bar{W}^{[k^*],i}$ can be fully recovered by the MDS property of the code $\mathbb{C}$; see Fig. \ref{fig:interferencecancel} for an illustration. Further note that the $i$-th component of $\tilde{A}_{n}^{[k^*]}$ for $n\in\{0,1,\ldots,N-1\}\setminus\Tc_i$ can be written as 
    \begin{align}
    \left(\bar{W}^{[k^*],i}\cdot\tilde{G}^*_n\right)\oplus \left(W^{k^*,\tilde{Q}_n^{k^*,i}}\cdot\tilde{G}^*_n\right),
    \end{align}
    from which, since $\bar{W}^{[k^*],i}$ is known, we can recover 
    \begin{align}
    \left(W^{k^*,\tilde{Q}_n^{k^*,i}}\cdot\tilde{G}^*_n\right), \quad n\in\{0,1,\ldots,N-1\}\setminus\Tc_i.
    \end{align}
  Denote $\Nc_m:= \left\{n\big{|} \tilde{Q}_{n}^{k^*,i}=m\right\}$. As long as $|\Nc_m|\geq T$, we can recover the vector $W^{k^*,m}$ by again invoking the property of the MDS code $\mathbb{C}$.  
\end{enumerate}

\begin{table*}[tb]
\def\arraystretch{1.3}
\setlength{\tabcolsep}{1.2pt}
 \centering
        \caption{Queries and answers for $(N,T,K)=(3,2,3)$.    \label{tbl:exp323}}

     \begin{tabular}{|c|c|c||c|c|c||c|c|c|}
        \hline
        \multicolumn{3}{|c||}{database-0}& \multicolumn{3}{c||}{database-1}&
        \multicolumn{3}{c|}{database-2} \\ \hline        
$Q_0$ & $\tilde{Q}_0$ &$A_0$ & $Q_1$& $\tilde{{Q}}_1$ &$A_1$ &${Q}_2$& $\tilde{{Q}}_2$ &$A_2$ \\ \hline
         
          $\begin{pmatrix} 
          0\\
          0\\
          0
         \end{pmatrix}$
         &
         $\begin{pmatrix} 
          01  \\
          01  \\
          01
         \end{pmatrix}$
         &
         $\begin{pmatrix} 
          V^0_0 \oplus V^1_0 \oplus V^2_0, & \emptyset
         \end{pmatrix}$
         &
         $\begin{pmatrix} 
          0  \\
          0 \\
          1
         \end{pmatrix}$
         &
         $\begin{pmatrix} 
          01 \\
          01\\
          12\\
         \end{pmatrix}$
         &
         $\begin{pmatrix} 
          V^0_1 \oplus V^1_1,& \emptyset
         \end{pmatrix}$         
          &
         $\begin{pmatrix} 
          \mathbf{0}\\
          \mathbf{0}\\
          \mathbf{2}
         \end{pmatrix}$
         &
         $\begin{pmatrix} 
          \mathbf{01}\\
          \mathbf{01}\\
          \mathbf{20}
         \end{pmatrix}$
         &
         $\begin{pmatrix} 
          \mathbf{V^0_2 \oplus V^1_2,} &  \mathbf{V^2_2}
         \end{pmatrix}$
         \\ \hline
         $\begin{pmatrix} 
          \mathbf{0}\\
          \mathbf{1}\\
          \mathbf{2}
         \end{pmatrix}$
         &
         $\begin{pmatrix} 
          \mathbf{01}\\
          \mathbf{12}\\
          \mathbf{20}
         \end{pmatrix}$
         &
         $\begin{pmatrix} 
          \mathbf{V^0_0,} &  \mathbf{V^2_0}
         \end{pmatrix}$
         &
         $\begin{pmatrix} 
          0\\
          1\\
          0
         \end{pmatrix}$
         &
         $\begin{pmatrix} 
          01\\
          12\\
          01
         \end{pmatrix}$
         &
         $\begin{pmatrix} 
          V^0_1 \oplus V^2_1,&  \emptyset
         \end{pmatrix}$
         &
         $\begin{pmatrix} 
          0\\
          1\\
          1
         \end{pmatrix}$
         &
         $\begin{pmatrix} 
          01\\
          12\\
          12
         \end{pmatrix}$
         &
         $\begin{pmatrix} 
          V^0_2,&\emptyset
         \end{pmatrix}$
         \\ \hline
         
          $\begin{pmatrix} 
          0\\
          2\\
          1
         \end{pmatrix}$
         &
         $\begin{pmatrix} 
          01\\
          20\\
          12
         \end{pmatrix}$
         &
         $\begin{pmatrix} 
          V^0_0,&  V^1_0
         \end{pmatrix}$
          &
         $\begin{pmatrix} 
          \mathbf{0}\\
          \mathbf{2}\\
          \mathbf{2}
         \end{pmatrix}$
         &
         $\begin{pmatrix} 
          \mathbf{01}\\
          \mathbf{20}\\
          \mathbf{20}
         \end{pmatrix}$
         &
         $\begin{pmatrix} 
          \mathbf{V^0_1,}&          \mathbf{V^1_1 \oplus V^2_1}
         \end{pmatrix}$
          &
         $\begin{pmatrix} 
          0\\
          2\\
          0 
         \end{pmatrix}$
         &
         $\begin{pmatrix} 
          01\\
          20\\
          01
         \end{pmatrix}$
         &
         $\begin{pmatrix} 
          V^0_2 \oplus V^2_2,&          V^1_2
         \end{pmatrix}$
         \\ \hline

\vdots&  \vdots                              & \vdots                          & \vdots & \vdots& \vdots                              & \vdots                          & \vdots & \vdots \\   \hline   
         $\begin{pmatrix} 
          2\\
          2\\
          2
         \end{pmatrix}$
         & 
         $\begin{pmatrix} 
          20\\
          20\\          
          20
         \end{pmatrix}$
         &
         $\begin{pmatrix} 
          \emptyset,&       V^0_0 \oplus V^1_0 \oplus V^2_0
         \end{pmatrix}$
          &
         $\begin{pmatrix} 
          2\\
          2\\
          0
         \end{pmatrix}$
         &
         $\begin{pmatrix} 
          20\\
          20  \\
          01
         \end{pmatrix}$
         &
         $\begin{pmatrix} 
          V^2_1,&  V^0_1 \oplus V^1_1
         \end{pmatrix}$
          &
         $\begin{pmatrix} 
          2\\
          2\\
          1
         \end{pmatrix}$
         &
         $\begin{pmatrix} 
          20\\
          20\\
          12
         \end{pmatrix}$
         &
         $\begin{pmatrix} 
          \emptyset,&      V^0_2 \oplus V^1_2
         \end{pmatrix}$
         \\ \hline
    \end{tabular}
\end{table*}

\subsection{An Example for Construction-A}

Let us first consider an example $(N,T,K)=(3,2,3)$, which induces $(p,r,s,L)=(1,1,2,2)$ in the code. We omit the index $i$ since here $r=1$. The possible queries $Q_0$, $Q_1$, and $Q_2$ are listed in the corresponding columns in Table \ref{tbl:exp323}. With a given query $Q_n$, the expanded query $\tilde{Q}_n$ is given to its right, the second column of which is  by adding $1$ to each component and then taking modulo-$3$, as specified in Step-4 of the protocol. The answer $A_n$ is then simply constructed by taking each column of $\tilde{Q}_n$, and forming the addition of the corresponding $V$ symbols, by however, taking advantage of the fact that $V_n^k=0$ whenever $k\geq 1$.

Consider the case to retrieve message $k^*=1$, and the key is $\Fb=(0,1,2)^T$. Then the 
queries are 
\begin{align}
Q_0=(0,1,2)^T,\quad Q_1=(0,2,2)^T,\quad Q_2=(0,0,2)^T.
\end{align}
The corresponding queries (and query matrices such induced) and answers are marked bold in Table~\ref{tbl:exp323}. In these $\tilde{Q}$ matrices, each column has at least one element being $0$, and thus the total number of transmission symbols is $6$.  It is seen that from $(V_0^0,V_1^0)$,  the symbol $V_2^0$ can be recovered by the MDS property, and thus $V_2^1$. Similarly, we can recover $V_1^1$. Using both $(V_1^1,V_2^1)$, we can then recover the original message $W_1$ by decoding the MDS code $\mathbb{C}$.

\subsection{Correctness, Privacy, and Download Cost}

According to the last coding component function (the reconstruction function) in Construction-A, the correctness of the proposed code hinges on two conditions: $|\mathcal{T}_i|\geq T$ for all $i=0,1,\ldots,s-1$ and $|\Nc_m|\geq T$ for all $m=0,1,\ldots,r-1$. We establish these two conditions in the following lemma, whose proof can be found in Appendix \ref{appendix:forward}.

\begin{lemma}\label{lemma:TN} 
   In Construction-A, for any request of message-$k^*$ and any random key $\Fb$, 
    \begin{enumerate}
        \item $|\Tc_i| = T$ for any $i\in \{0,1,\ldots,s-1\}$;
        \item $|\Nc_m| = T$ for any $m \in \{0,1,\ldots, r-1\}$.
    \end{enumerate}
\end{lemma}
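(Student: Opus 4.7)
The plan is to unwind the definitions of $\tilde{Q}_n^{k^*,i}$, $\Tc_i$, and $\Nc_m$ from Construction-A and reduce both claims to elementary residue counts modulo $r+s$. From the query formula (\ref{eqn:queries}), the only $n$-dependent entry of $Q_n^{[k^*]}$ is the $k^*$-th, equal to $(\Fb_{k^*}+n)_{r+s}$; combined with the column shift by $i$ in the definition of $\tilde{Q}_n$, this gives
\begin{align}
\tilde{Q}_n^{k^*,i} = (\Fb_{k^*} + n + i)_{r+s}.
\end{align}
The crucial structural fact I will invoke is $N = p(r+s)$, which is immediate from (\ref{eqn:prs}) via $N-T = pr$ and $T = ps$. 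Consequently, for any fixed shift $a \in \mathbb{Z}$, the map $n \mapsto (a+n)_{r+s}$ from $\{0,1,\ldots,N-1\}$ onto $\{0,1,\ldots,r+s-1\}$ is exactly $p$-to-$1$: every residue class is attained exactly $p$ times.

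For part (1), I would identify $\Tc_i$ with the preimage, under the above map (with $a=\Fb_{k^*}+i$), of the $s$-element window $\{r, r+1, \ldots, r+s-1\}$, so $|\Tc_i| = p \cdot s = T$, independently of $i$ and $\Fb$. For part (2), I would fix $m \in \{0,1,\ldots,r-1\}$ and apply the same formula column by column. For each $i \in \{0,1,\ldots,s-1\}$ the equation $\tilde{Q}_n^{k^*,i}=m$ has exactly $p$ solutions in $n$; and since for any fixed $n$ the $s$ residues $\{(\Fb_{k^*}+n+i)_{r+s}\}_{i=0}^{s-1}$ are all distinct, the $s$ solution sets are pairwise disjoint. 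Summing gives $|\Nc_m| = s \cdot p = T$, where $\Nc_m$ is understood as the union over $i$ of the databases $n$ whose $i$-th answer component exposes an evaluation $W^{k^*,m}\cdot \tilde{G}^*_n$.

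I do not expect a hard core to this lemma; it is a purely combinatorial consequence of the divisibility $(r+s)\mid N$ together with the cyclic-shift structure built into the queries. The only bookkeeping point in the write-up is to make the interpretation of $\Nc_m$ precise: it must aggregate across the $s$ answer columns, since a single-column count would yield only $p$ evaluations, generally short of the $T=ps$ needed to decode $W^{k^*,m}$ via the MDS code $\mathbb{C}$. Once this interpretation is in place, the proof is a one-line count for each part.
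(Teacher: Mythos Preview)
Your proposal is correct and follows essentially the same approach as the paper's proof: both derive $\tilde{Q}_n^{k^*,i}=(\Fb_{k^*}+n+i)_{r+s}$, use the $p$-to-$1$ property of $n\mapsto(a+n)_{r+s}$ coming from $N=p(r+s)$ to count $|\Tc_i|=ps=T$, and for $|\Nc_m|$ count $p$ hits per column $i$ and argue disjointness across $i$ via the distinctness of $\tilde{Q}_n^{k^*,i}$ in $i$ for fixed $n$. Your explicit clarification of $\Nc_m$ as a union over $i$ is the same reading the paper implicitly uses when it says the $T$ matrix positions with value $m$ lie in distinct columns.
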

 
We have the following main theorem for Construction-A. 

\begin{theorem} \label{thm:1}
Codes obtained by Construction-A are both private and capacity-achieving.
\end{theorem}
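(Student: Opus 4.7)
}

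The plan is to verify the three requirements in Definition \ref{def:fun} separately: privacy, retrieval correctness, and the rate formula. Correctness has essentially been prepared by Lemma \ref{lemma:TN}, so the bulk of the work is the rate computation.

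For privacy, I would fix an arbitrary query $q \in \mathcal{Q}_n$ and show that $\Pr(Q_n^{[k^*]} = q)$ does not depend on $k^*$. By the query formula \eqref{eqn:queries}, the event $\{Q_n^{[k^*]} = q\}$ forces $\mathbf{F}_j = q_j$ for $j \neq k^*$ and $\mathbf{F}_{k^*} = (q_{k^*} - n)_{r+s}$. Plugging this candidate $\mathbf{F}$ into the defining constraint of $\mathcal{F}$ in \eqref{eqn:Fc}, the condition $\sum_k \mathbf{F}_k \equiv 0 \pmod{r+s}$ becomes $\sum_j q_j \equiv n \pmod{r+s}$, which is independent of $k^*$. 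Since $\mathbf{F}$ is uniform on $\mathcal{F}$, this gives $\Pr(Q_n^{[k^*]}=q) \in \{0, 1/|\mathcal{F}|\}$ with the value determined by a condition that does not mention $k^*$, proving privacy.

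For retrieval correctness, I would simply assemble Step 6 of the construction with Lemma \ref{lemma:TN}: since $|\mathcal{T}_i| = T$, the MDS property of $\mathbb{C}$ allows recovery of $\bar{W}^{[k^*],i}$ from the $i$-th components of the intermediate answers across $\mathcal{T}_i$; this interference can then be subtracted from the remaining $N-T$ databases to expose the $\tilde{G}_n^*$-encoded symbols of $W^{k^*,\tilde{Q}_n^{k^*,i}}$; finally $|\mathcal{N}_m|=T$ provides enough coded shares of each sub-message $W^{k^*,m}$ to decode it, yielding $\hat{W}^{k^*}=W^{k^*}$.

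The main work, and the main obstacle, is showing that the rate equals \eqref{eqn:capacityformula}. The plan is to compute $\sum_{n=0}^{N-1}\mathbb{E}[\ell_n]$ directly from \eqref{eqn:nonzeropos} and check that it equals $L/C$. By \eqref{eqn:nonzeropos},
\[
\sum_{n=0}^{N-1}\mathbb{E}[\ell_n] \;=\; \sum_{n=0}^{N-1}\sum_{i=0}^{s-1}\Bigl(1-\Pr\bigl[\forall k\colon (Q_n^{[k^*],k}+i)_{r+s}\geq r\bigr]\Bigr).
\]
From the privacy computation above, $Q_n^{[k^*]}$ is uniform on the coset $\{q\in\mathbb{Z}_{r+s}^K:\sum_k q_k\equiv n\}$, which has $(r+s)^{K-1}$ elements. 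The key observation is that $r+s = N/p$, so as $n$ runs over $\{0,1,\ldots,N-1\}$ each of the $r+s$ cosets is hit exactly $p$ times. Summing the "column $i$ is useless" probability over $n$ therefore removes the coset conditioning and leaves $p$ times the unconditional count: for each fixed $i$, each $q_k$ independently has exactly $s$ values with $(q_k+i)_{r+s}\geq r$, so
\[
\sum_{n=0}^{N-1}\Pr\bigl[\text{column $i$ of $\tilde{Q}_n$ useless}\bigr] \;=\; \frac{p\cdot s^K}{(r+s)^{K-1}}.
\]
Substituting and simplifying,
\[
\sum_{n=0}^{N-1}\mathbb{E}[\ell_n] \;=\; s\!\left(N-\frac{p\,s^K}{(r+s)^{K-1}}\right) \;=\; \frac{p\,s\bigl((r+s)^K - s^K\bigr)}{(r+s)^{K-1}}.
\]
To finish, I would plug in $L=prs$, $T=ps$, $N=p(r+s)$ and show by direct algebra that
\[
\frac{L}{\sum_{n=0}^{N-1}\mathbb{E}[\ell_n]} \;=\; \frac{prs\,(r+s)^{K-1}}{ps\bigl((r+s)^K-s^K\bigr)} \;=\; \frac{1-(T/N)}{1-(T/N)^K},
\]
which is exactly the capacity in \eqref{eqn:capacityformula}. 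The delicate step here is the coset-counting identity, since $\Pr[\text{col $i$ useless}]$ is not the same for every $n$ (as illustrated by the $(3,2,3)$ example in Table \ref{tbl:exp323}); it is only the sum over $n$ that becomes clean, which is where the coprimality encoded in the definitions $T=ps$ and $N-T=pr$ is used.
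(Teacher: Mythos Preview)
Your proposal is correct. Privacy and correctness are handled exactly as in the paper (uniformity of $Q_n^{[k^*]}$ on the coset $\{q:\sum_k q_k\equiv n\}$, and the assembly of Step~6 with Lemma~\ref{lemma:TN}).

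For the download cost you take a genuinely different route from the paper. The paper fixes a column index $i^*$ and conditions on the event $E$ that some \emph{non-requested} coordinate $(\Fb_k+i^*)_{r+s}$, $k\neq k^*$, falls below $r$; since those $K-1$ coordinates are i.i.d.\ uniform on $\mathbb{Z}_{r+s}$, $\Pr(E)=1-(s/(r+s))^{K-1}$. On $E$ every database transmits; on $E^c$ only the $N-T$ databases with $(\Fb_{k^*}+i^*+n)_{r+s}<r$ transmit, and this count is deterministic because $N=p(r+s)$. Summing gives $sN[1-(T/N)^K]$ directly. Your approach instead sums over $n$ first and uses that the $N$ values of $n$ cover each coset of $\mathbb{Z}_{r+s}$ exactly $p$ times, so the coset conditioning drops out and you are left with $p$ times the unconditional count $s^K$ of ``useless'' vectors. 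Both computations are clean; the paper's case split makes explicit \emph{why} the scheme saves download (the $(T/N)^{K-1}$ fraction of keys where interference is silent), while yours avoids any case analysis at the cost of that intuition. One small remark: the identity $\sum_n \Pr[\text{col }i\text{ useless}]=ps^K/(r+s)^{K-1}$ only uses $N=p(r+s)$, not the coprimality of $r$ and $s$; the coprimality enters only when you identify $L=prs$ with $\lcm(N-T,T)$.
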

\begin{proof} 
The fact that the code is private is immediate, by observing that $Q_{n}^{[k^*]}$ is uniformly distributed on the set 
\begin{align}
&\mathcal{Q}_n=\bigg{\{}(f_0,\ldots,f_{K-1})^T \in \{0,\ldots,r+s-1\}^K \nonumber \\
& \qquad\qquad\qquad \qquad \qquad \bigg{|} \left(\sum_{k=0}^{K-1}f_k-n\right)_{r+s}=0  \bigg{\}}, \label{eqn:F}
\end{align}
regardless of the value of $k^*$.

The expected lengths of the answers is
\begin{align}
&\sum_{n=0}^{N-1} \mathbb{E} (\ell_n)=\sum_{n=0}^{N-1}\sum_{i=0}^{s-1} \mathbf{Pr}\left(\min_{k=0,1,\ldots,K-1} \tilde{Q}_n^{k,i}<r\right)\nonumber\\
&\qquad\qquad=\sum_{i=0}^{s-1}\sum_{n=0}^{N-1}\mathbf{Pr}\left(\min_{k=0,1,\ldots,K-1} \tilde{Q}_n^{k,i}<r\right),
\end{align}
assuming an arbitrary message $k^*$ is being requested. The probabilities involved in the summand $i=i^*$ depend on
\begin{align}
\left(\tilde{Q}_0^{0:K-1,i^*},\tilde{Q}_1^{0:K-1,i^*},\ldots,\tilde{Q}_{N-1}^{0:K-1,i^*}\right). \label{eqn:subqueries}
\end{align}
By the definition of $\tilde{Q}_n^{k,i}$, it is clear that if any item in
\begin{align*}
(\Fb_0+i^*,...,\Fb_{k^*-1}+i^*,\Fb_{k^*+1}+i^*,\ldots,\Fb_{K-1}+i^*)_{r+s}
\end{align*}
is less than $r$, then $\min_{k=0,1,\ldots,K-1} \tilde{Q}_n^{k,i^*}<r$ for all $n=0,1,\ldots,N-1$, which will induce $N$ transmitted symbols in the retrieval from all databases for $i=i^*$; this event $E$ occurs with probability $1-(s/(r+s))^{K-1}$. 
On the other hand, when the event $E$ does not occur, in the vector
\begin{align*}
(\Fb_{k^*}+i^*+0,\Fb_{k^*}+i^*+1,\ldots,\Fb_{k^*}+i^*+N-1)_{r+s}
\end{align*}
the number of elements that are less than $r$ is $N-T$, which induces $(N-T)$ symbols being transmitted. Therefore
\begin{align}
&\sum_{n=0}^{N-1} \mathbb{E} (\ell_n)=s\left[\mathbf{Pr}(E)N+\left(1-\mathbf{Pr}(E)\right)(N-T)\right]\nonumber\\
&=sN-sT\left(\frac{T}{N}\right)^{K-1}=sN\left[1-\left(\frac{T}{N}\right)^{K}\right],
\end{align}
from which it follows that the code is indeed capacity achieving, by taking into account (\ref{eqn:prs}).
\end{proof}

The following lemma is also immediate, and we state it as a lemma below. 
\begin{lemma}
The upload cost of Construction-A is $N(K-1)\log\left[N/\gcd1(N,T)\right]$. 
\end{lemma}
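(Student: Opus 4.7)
The plan is to count $|\mathcal{Q}_n|$ exactly and then sum the logarithms. From the privacy argument in the proof of Theorem~\ref{thm:1}, we already know that for every database-$n$, the query $Q_n^{[k^*]}$ takes values in the set $\mathcal{Q}_n$ displayed in (\ref{eqn:F}), namely
\begin{align*}
\mathcal{Q}_n = \left\{(f_0,\ldots,f_{K-1})^T \in \{0,\ldots,r+s-1\}^K \;\Big|\; \Big(\textstyle\sum_{k=0}^{K-1} f_k - n\Big)_{r+s} = 0\right\}.
\end{align*}
Moreover, by construction (\ref{eqn:queries}) together with the uniform distribution of $\Fb$ on $\Fc$ in (\ref{eqn:Fc}), $Q_n^{[k^*]}$ is uniform on $\mathcal{Q}_n$, so every element of $\mathcal{Q}_n$ is actually realized as a query.

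The first step is to count $|\mathcal{Q}_n|$. Since $\mathcal{Q}_n$ is defined by a single linear equation modulo $r+s$ on $K$ free variables each ranging in $\{0,\ldots,r+s-1\}$, any choice of the first $K-1$ coordinates uniquely determines the last one through the modular constraint, which yields
\begin{align*}
|\mathcal{Q}_n| = (r+s)^{K-1}.
\end{align*}
The second step is to rewrite $r+s$ in terms of the original parameters. Using (\ref{eqn:prs}) we have $N = (N-T) + T = pr + ps = p(r+s)$, so $r+s = N/p = N/\gcd1(N,T)$.

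Combining these two observations and summing the contributions from all $N$ databases gives
\begin{align*}
\sum_{n=0}^{N-1} \log_2 |\mathcal{Q}_n| = N \cdot (K-1)\log_2\!\left[\frac{N}{\gcd1(N,T)}\right],
\end{align*}
which is the claimed upload cost. There is essentially no obstacle in this argument: the only nontrivial input, namely that the queries to database-$n$ actually fill the set $\mathcal{Q}_n$ uniformly, has already been established during the privacy part of Theorem~\ref{thm:1}, so the lemma reduces to a one-line counting exercise plus the identity $r+s = N/\gcd1(N,T)$.
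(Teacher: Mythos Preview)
Your proof is correct and essentially matches the paper's argument. The only cosmetic difference is that the paper computes $|\mathcal{Q}_n|$ by noting from (\ref{eqn:queries}) that $|\mathcal{Q}_n| = |\mathcal{F}| = (N/\gcd1(N,T))^{K-1}$ (the query map being a bijection from $\mathcal{F}$ to $\mathcal{Q}_n$), whereas you count $\mathcal{Q}_n$ directly from its defining modular constraint; both routes are one-line observations yielding the same count.
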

\begin{proof}
Consider any $k^*$. By (\ref{eqn:queries}), we see that $| \Qc_n | = |\Fc| = (N/\gcd1(N,T))^{K-1}$, and it follows that the upload cost is $\sum_{n=0}^{N-1} \log(|\Qc_n|) = N(K-1)\log\left[N/\gcd1(N,T)\right]$.
\end{proof}

\section{New MDS-PIR Code: Construction-B} \label{sec:cod-B}

In this section, we provide an alternative code construction, namely Construction-B. This construction requires a lower upload cost than Construction-A, however, it relies on two different coding strategies for the two cases of high rate codes $T\geq  N-T$ and low rate codes $T\leq N-T$. The high rate code construction is essentially built  on a product code, while the low rate codes bear more similarity to Construction-A.

\subsection{Construction-B for $T\geq N-T$}
In this construction, the same random key $\Fb = (\Fb_0,\Fb_1,\ldots, \Fb_{K-1})$ as in Construction-A is used, and the MDS encoding matrices and decoding functions are also exactly the same as in Construction-A. We need a second generic $(s,r)$-$\MDS$ code $\mathbb{C}_c$ in the alphabet $\mathcal{X}$ in this construction. Construction-B essentially utilizes a product code with row code $\mathbb{C}$ and column code $\mathbb{C}_c$ \cite{LinCostello:book}. In this context, it is more convenient to view the message $W^k$ as being represented as an $r\times T$ matrix, denoted as $\breve{W}^k$
\begin{align}
\breve{W}^k=\begin{bmatrix}
W^{k,0}\\
W^{k,1}\\
\vdots\\
W^{k,r-1}
\end{bmatrix}.
\end{align}
Next we provide the  coding components $(3-6)$ in Construction-B.

\begin{enumerate}
\setcounter{enumi}{2}
    
    \item The query generating function at server-$n$ produces the following $K\times 1$ query vector
    \begin{align}
    &\phi_{n}(k^*,\Fb) = Q^{[k^*]}_{n}=(Q^{[k^*]}_{0,n},Q^{[k^*]}_{1,n},\ldots,Q^{[k^*]}_{K-1,n})^T\nonumber\\
    &\qquad = \ceil{(\Fb_0,\Fb_1,\ldots,  \Fb_{k^*-1},\left(\Fb_{k^*}+n\right)_{s+r},\nonumber\\
    &\qquad\qquad\qquad\qquad\qquad\Fb_{k^*+1},\ldots,\Fb_{K-1})^T}_s,
    \end{align}    
    where $\ceil{x}_s :=\min(x,s)$, and it operates on a vector by operating on each component individually. 
    \item Define an $s \times (s+1)$ query pattern matrix $P$ as
    \begin{align}
    P:=\left[
    \begin{array}{ccccccccc|c}
    1&1&1&\cdots&1&0&0&\cdots&0&0\\
    0&1&1&\cdots&1&1&0&\cdots&0&0\\
    \vdots&\vdots&\vdots&\vdots&&\vdots&&\vdots&\vdots&0\\ 
    1&1&\cdots&1&0&0&\cdots&0&1&0
    \end{array}
        \right],
    \end{align}
    where the first row has the first $r$ elements being $1$'s and the rest $(s-r+1)$ being $0$'s, and the remaining rows are obtained by cyclically shifting the first $s$ elements in the first row but keeping the last $0$ in place. 
    The query length function is then defined as
    \begin{align}
        \ell_n  = \sum_{i=0}^{s-1} \mathbb{1} \left( \sum_{k=0}^{K-1} P_{i, Q^{[k^*]}_{k,n}} > 0 \right),\label{eqn:lengthB1}
     \end{align}
     i.e., it is the number of columns in the matrix $P$ selected by the vector $Q^{[k^*]}_{n}$ that have non-zero elements. 
    \item 
    Recall that coded message $W^k$ at database-$n$ is a length-$r$ vector $V^k_n$
    \begin{align}
   V^k_n= \begin{bmatrix}
    V^{k,0}_{n}\\  V^{k,1}_{n}\\ \vdots\\ V^{k,r-1}_{n}
        \end{bmatrix}=  \begin{bmatrix}
    W^{k,0} \\   W^{k,1}\\ \vdots\\  W^{k,r-1}
    \end{bmatrix}\cdot \tilde{G}^*_n.
    \end{align}
        In order to generate the answer, each $V_n^{k}$ vector is encoded by $\mathbb{C}_c$ into a length-$s$ intermediate code vector 
        \begin{align}
        \tilde{V}^k_{n}=(\tilde{V}^k_{n,0},\tilde{V}^k_{n,1},\ldots,\tilde{V}^k_{n,s-1}) =(\hat{G}^*)^T\cdot V^k_n,
        \end{align}
        where $\hat{G}^*$ is the generator matrix of the code $\mathbb{C}_c$.  An intermediate answer vector is then produced 
    \begin{align}
    &\tilde{A}^{[k^*]}_{n} := \left(\bigoplus_{k=0}^{K-1}  \tilde{V}^{k,0}_{n} \cdot P_{0, Q^{[k^*]}_{k,n}}, \bigoplus_{k=0}^{K-1}  \tilde{V}^{k,1}_{n} \cdot P_{1, Q^{[k^*]}_{k.n}},\right.\nonumber\\
    &\qquad\qquad\qquad\left.\cdots,\bigoplus_{k=0}^{K-1}  \tilde{V}^{k,s-1}_{n} \cdot P_{s-1, Q^{[k^*]}_{k,n}}\right)^T.
    \end{align}
    The eventual answer ${A}_{n}^{[k^*]}$ of length $\ell_n$ is formed by concatenating the components of $\tilde{A}^{[k^*]}_{n}$ 
    which are not constantly zero, i.e., those indicated by (\ref{eqn:lengthB1}).
    \item For any $i\in \{0,1,\ldots,s-1\}$, define the interference database set $\Tcc_i := \{n|~P_{i, Q_{k^*,n}^{[k^*]}} = 0\}$. For $n\in \Tcc_i$, the $i$-th symbol in the intermediate answer is
    \begin{align*}
    &\tilde{A}^{[k^*]}_{n,i} = \bigoplus_{k=0}^{K-1}  \tilde{V}^{k,i}_{n} \cdot P_{i, Q^{[k^*]}_{k,n}}\nonumber\\
    &\qquad= \bigoplus_{k=0}^{K-1}  (\hat{G}^*_i)^T \cdot {V}^{k}_{n}\cdot P_{i, Q^{[k^*]}_{k,n}}\nonumber\\
    &\qquad=\bigoplus_{k=0}^{K-1}  (\hat{G}^*_i)^T \cdot \breve{W}^{k}\cdot \tilde{G}^*_n\cdot P_{i, Q^{[k^*]}_{k,n}}\nonumber\\
    &\qquad=\bigoplus_{k=0}^{K-1}  (\hat{G}^*_i)^T \cdot \left(\breve{W}^{k}\cdot P_{i, Q^{[k^*]}_{k,n}}\right)\cdot \tilde{G}^*_n\nonumber\\
    &\qquad=   (\hat{G}^*_i)^T \cdot \left(\bigoplus_{k=0}^{K-1}\breve{W}^{k}\cdot P_{i, Q^{[k^*]}_{k,n}}\right)\cdot \tilde{G}^*_n\nonumber\\
    &\qquad=(\hat{G}^{*}_i)^T \cdot \bar{W}^{[k^*]}_i\cdot\tilde{G}^*_n,   
    \end{align*}     
    where the $r\times T$ matrix $ \bar{W}^{[k^*]}_i$ is defined as
    \begin{align*}
   &\bar{W}^{[k^*]}_i := \left(\bigoplus_{k=0}^{k^*-1}\breve{W}^{k}\cdot P_{i, Q^{[k^*]}_{k,n}} \right) \nonumber\\
   &\qquad\qquad\qquad\qquad\qquad\oplus \left(\bigoplus_{k=k^*+1}^{K-1} \breve{W}^{k}\cdot P_{i, Q^{[k^*]}_{k,n}}\right). 
    \end{align*}
    Note that $\bar{W}^{[k^*]}_i$ is not a function of $n$, since $Q_{k, n}^{[k^*]}=Q_{k, n'}^{[k^*]}$ unless $k=k^*$. Thus as long as $|\Tcc_i|\geq T$, the vector $(\hat{G}^{*}_i)^T \cdot \bar{W}^{[k^*]}_i$ can be fully recovered by the MDS property of the code $\mathbb{C}$. Further note that $\tilde{A}_{n,i}^{[k^*]}$ for $n\in\{0,1,\ldots,N-1\}\setminus\Tcc_i$ can be written as 
    \begin{align}
    \left((\hat{G}^{*}_{i})^T\cdot \bar{W}^{[k^*]}_i\cdot\tilde{G}^*_n\right)\oplus \left((\hat{G}^{*}_{i})^T\cdot \breve{W}^{k^*}\cdot\tilde{G}^*_n\right),
    \end{align}
    from which, since $(\hat{G}^{*}_i)^T \cdot \bar{W}^{[k^*]}_i$ is known, we can recover 
    \begin{align}
    (\hat{G}^{*}_i)^T \cdot \breve{W}^{k^*}\cdot\tilde{G}^*_n, \quad n\in\{0,1,\ldots,N-1\}\setminus\Tcc_i. \label{eqn:usefuln}
    \end{align}
    Denote $\mathcal{S}_n:= \left\{i \Big{|} P_{i, Q_{k^*,n}^{[k^*]}}=1 \right\}$  and $\mathcal{N}:= \left\{n \mid |\Sc_n|\geq r\right\}$,  the latter of which is the set of databases that provide at least $r$ symbols of the requested messages in the form (\ref{eqn:usefuln}). For any $n \in \Nc$, we can recover $\breve{W}^{k^*} \cdot \tilde{G}^*_n$ by invoking the property of $\MDS$ code $\mathbb{C}_c$. Then as long as $|\mathcal{N}| \geq T$,  $\breve{W}^{k^*}$ can be fully recovered by  invoking the $\MDS$ property of code $\mathbb{C}$. 
\end{enumerate}

\subsubsection{An Uncompressed Description of Construction-B}
The description of the coding components above is in a compressed form, and offers little intuition. The following equivalent description, on the other hand, can provide better intuition at the expense of more redundant items. Let the extended pattern matrix $\bar{P}$ of size $s \times (s+r)$ be defined as
\begin{align}
\bar{P}=[P|\mathbf{0}_{s\times(r-1)}],
\end{align}
i.e., expanding the original pattern matrix $P$ by appending an all-$0$ matrix of size $s\times(r-1)$. The same query answer can now be equivalently produced at each server by using the following auxiliary query
   \begin{align}
&\bar{Q}^{[k^*]}_{n} = (\Fb_0,\Fb_1,\ldots, \Fb_{k^*-1},\nonumber\\
&\qquad\qquad\qquad\left(\Fb_{k^*}+n\right)_{s+r},\Fb_{k^*+1},\ldots,\Fb_{K-1})^T,
    \end{align} 
    i.e., without using the $\ceil{\cdot}_s$ function mapping, and then following the same manner in answer generating, using the extended patter matrix $\tilde{P}$. The stored contents of message $W^k$ across all the databases can be visualized as follows
\begin{align}
\begin{bmatrix}
    \tilde{V}_{0}^{k,0}       & \tilde{V}_{1}^{k,0} & \cdots &  \tilde{V}_{N-1}^{k,0}\\
    \tilde{V}_{0}^{k,1}      & \tilde{V}_{1}^{k,1} & \cdots &  \tilde{V}_{N-1}^{k,1}\\
    \vdots &    \vdots&    \vdots&    \vdots\\
    \tilde{V}_{0}^{k,r-1}   & \tilde{V}_{1}^{k,r-1} & \cdots &  \tilde{V}_{N-1}^{k,r-1} \\\hline
    \tilde{V}_{0}^{k,r}      & \tilde{V}_{1}^{k,r} & \cdots &  \tilde{V}_{N-1}^{k,r} \\
    \vdots &    \vdots&    \vdots&    \vdots\\
    \tilde{V}_{0}^{k,s-1}   & \tilde{V}_{1}^{k,s-1} & \cdots &  \tilde{V}_{N-1}^{k,s-1}
\end{bmatrix},
\end{align}
where each column corresponds to a database, and the contents below the horizontal line are not stored physically, but can be generated as part of the answer computation. 

It is straightforward to verify that with the auxiliary query $\bar{Q}^{[k^*]}_{n}$ as the queries, the extended pattern matrix $\bar{P}$ in the answer generation, and the uncompressed stored contents $\tilde{V}_{0}^{k,0}$ as the stored content, the answer is precisely the same as the uncompressed version described above.

\subsubsection{An Example of Construction-B}
Consider an example $N=5$, $T=3$, $K=4$, which induces the parameters $(p,r,s,L)=(1,2,3,6)$. The pattern matrix $P$ and the extended pattern matrix $\bar{P}$ are
\begin{align}
P=\begin{bmatrix}
1&1&0&0\\
0&1&1&0\\
1&0&1&0
\end{bmatrix},\quad
\bar{P}=\begin{bmatrix}
1&1&0&0&0\\
0&1&1&0&0\\
1&0&1&0&0
\end{bmatrix}. 
\end{align}
Let the messages be $W^{0:3}=(A,B,C,D)$. Consider the case when message $W^0=A$ is being requested, and the key is $\Fb=(4,1,2)$. Then the auxiliary queries are
\begin{align}
[\bar{Q}_0,\bar{Q}_1,\bar{Q}_2,\bar{Q}_3,\bar{Q}_4]=\begin{bmatrix}
3&4&0&1&2\\
4&4&4&4&4\\
1&1&1&1&1\\
2&2&2&2&2
\end{bmatrix},
\label{eqn:Qbar}
\end{align}
and the compressed queries are 
\begin{align}
[{Q}_0,{Q}_1,Q_2,{Q}_3,Q_4]=\begin{bmatrix}
3&3&0&1&2\\
3&3&3&3&3\\
1&1&1&1&1\\
2&2&2&2&2
\end{bmatrix}.\label{eqn:Q}
\end{align}
The answers from the five databases are then 
\begin{align}
\left[
\begin{array}{ccccc}
C^0_{0}                &  C^0_{1}                              & A^0_{2}+C^0_{2}             & A^0_{3}+C^0_{3}                           & C^0_{4}\\
C^1_{0}+D^1_{0} & C^1_{1}+D^1_{1}              & C^1_{2}+D^1_{2}             & A^1_{3}+C^1_{3}+D^1_{3}            & A^1_{4}+C^1_{4}+D^1_{4}\\
D^2_{0}                & D^2_{1}                              & A^2_{2}+D^2_{2}             & D^2_{3}                                           & A^2_{4}+D^1_{4}
\end{array}
\right],
\end{align}
where we have used $A^m_n$ to denote $\tilde{V}^{0,m}_n$, as the corresponding coded message $W^0=A$, and similarly for $B,C,D$. Observer that in the first row $(C^0_{0},C^0_{1},C^0_{4})$ can be used to recover $C^0_n$, $n=0,1,\ldots,4$, and thus to obtain $(A^0_{2},A^0_{3})$; similarly, in the second row and third row, we can recover information on the $A$ message. The information on $A$ we can recover is thus as given in the following message matrix where each column corresponds to a database
\begin{align}
\left[
\begin{array}{ccccc}
*&*              & A^0_{2}         & A^0_{3}       & *\\
*&*              &*                     & A^1_{3}       & A^1_{4}\\
*&*              & A^2_{2}         &*                   & A^2_{4}
\end{array}
\right].
\end{align}
It is now straightforward to see that through the product code based on the $(5,3)$ MDS code $\mathbb{C}$ and the $(3,2)$ MDS code $\mathbb{C}_c$, the message $A$ can be fully recovered. 

\subsubsection{Correctness, Privacy, and Communication Costs}

Similar to Construction-A, the correctness of Construction-B relies on the following two facts established as Lemma \ref{lemma:Double}, whose proof can be found in Appendix \ref{appendix:forward}.
\begin{lemma}\label{lemma:Double} 
   In Construction-B, for any request of message-$k^*$ and any random key $\Fb$, 
    \begin{enumerate}
        \item $|\Tcc_i| = T$ for any $i\in \{0,1,\ldots,s-1\}$;
        \item $|\Nc| = T$.
    \end{enumerate}
\end{lemma}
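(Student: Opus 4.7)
The plan is to reduce both parts to a single counting fact about how the $k^*$-th coordinate of the query vector, $Q_{k^*,n}^{[k^*]} = \ceil{(\Fb_{k^*}+n)_{s+r}}_s$, is distributed as $n$ ranges over $\{0,1,\ldots,N-1\}$. Since $N = p(s+r)$, the unclipped residue $(\Fb_{k^*}+n)_{s+r}$ hits each value in $\{0,1,\ldots,s+r-1\}$ exactly $p$ times, independently of the realization of $\Fb_{k^*}$. After applying the clipping $\ceil{\cdot}_s$, the value $s$ is produced from the $r$ residues $s,s+1,\ldots,s+r-1$, hence for exactly $pr = N-T$ databases; each value $j \in \{0,1,\ldots,s-1\}$ is attained for exactly $p$ databases. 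The other coordinates $Q_{k,n}^{[k^*]}$ with $k\neq k^*$ are constant in $n$, so they contribute nothing to the cardinalities of $\Tcc_i$ or $\Nc$.

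For part 1, I would fix a row $i$ of $P$ and analyze when $P_{i, Q_{k^*,n}^{[k^*]}}=0$. By construction, the last column (indexed $s$) of $P$ is identically zero, and the first $s$ columns of row $i$ form a cyclic shift of $(\underbrace{1,\ldots,1}_{r},\underbrace{0,\ldots,0}_{s-r})$; thus row $i$ has exactly $s-r$ zeros among columns $0,\ldots,s-1$, plus the zero in column $s$. The event $P_{i, Q_{k^*,n}^{[k^*]}} = 0$ then splits into two disjoint cases: $Q_{k^*,n}^{[k^*]} = s$, contributing $pr$ databases, or $Q_{k^*,n}^{[k^*]}$ landing on one of the $s-r$ zero columns of row $i$ among $\{0,\ldots,s-1\}$, contributing $p(s-r)$ databases. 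Summing, $|\Tcc_i| = pr + p(s-r) = ps = T$, as required.

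For part 2, I would observe that $\Sc_n$ is the set of $1$-entries in the $Q_{k^*,n}^{[k^*]}$-th column of $P$. If $Q_{k^*,n}^{[k^*]} = s$, that column is all zero so $|\Sc_n| = 0 < r$; otherwise $Q_{k^*,n}^{[k^*]} \in \{0,\ldots,s-1\}$, and by the cyclic structure each such column contains exactly $r$ ones, giving $|\Sc_n| = r$. Hence the condition $|\Sc_n|\geq r$ is equivalent to $Q_{k^*,n}^{[k^*]}\neq s$, which by the count above occurs for exactly $ps = T$ databases, yielding $|\Nc| = T$.

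The main obstacle, to the extent there is one, is simply the bookkeeping around the cyclic structure of $P$ and the clipping map $\ceil{\cdot}_s$; once one pins down that residues modulo $s+r$ are uniform across the $N = p(s+r)$ databases and that each column of $P$ in $\{0,\ldots,s-1\}$ has exactly $r$ ones while column $s$ has none, both statements fall out by a one-line count. No information-theoretic or algebraic machinery beyond the definitions is needed.
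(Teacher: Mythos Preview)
Your proposal is correct and follows essentially the same approach as the paper's proof. The only cosmetic difference is that the paper routes the count through the extended pattern matrix $\bar{P}=[P\,|\,\mathbf{0}_{s\times(r-1)}]$ and the unclipped auxiliary query $\bar{Q}_{k^*,n}^{[k^*]}$, so that each row of $\bar{P}$ has exactly $s$ zeros and the count $|\Tcc_i|=ps=T$ falls out without a case split; you instead handle the clipping $\ceil{\cdot}_s$ directly by separating the contribution of $Q_{k^*,n}^{[k^*]}=s$ (worth $pr$ databases) from the $s-r$ zero columns among $\{0,\ldots,s-1\}$ (worth $p(s-r)$ databases), arriving at the same total.
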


We also have the follow main theorem for Construction-B. 
 
\begin{theorem} \label{thm:low1}
The codes obtained by Construction-B for $T\geq N-T$ are both private and capacity-achieving.
\end{theorem}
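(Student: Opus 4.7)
The plan is to follow the template of Theorem~\ref{thm:1}: privacy follows from the symmetry of the random key, correctness is already given by Lemma~\ref{lemma:Double}, and the substantive part is to verify that the expected download $\sum_{n=0}^{N-1} \mathbb{E}(\ell_n)$ equals $sN[1-(T/N)^K]$, which (combined with $L = \lcm(N-T,T) = prs$) is precisely the value capacity-achievability requires.

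For privacy I would work with the uncompressed description. The auxiliary query $\bar{Q}^{[k^*]}_n$ is simply $\Fb$ with its $k^*$-th coordinate shifted by $n$ modulo $r+s$. Since $\Fb$ is uniform on the hyperplane $\{\sum_k f_k \equiv 0 \pmod{r+s}\}$, such a shift sends the distribution to the uniform distribution on $\{(f_0,\ldots,f_{K-1}) : \sum_k f_k \equiv n \pmod{r+s}\}$, a set that does not depend on $k^*$. Applying $\ceil{\cdot}_s$ componentwise yields the actual query $Q^{[k^*]}_n$, whose marginal therefore depends only on $n$, which is exactly the privacy condition.

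The main work is the download cost computation. Let $R_i \subseteq \{0,\ldots,s-1\}$ denote the positions where the $i$-th row of $P$ has a $1$; by the cyclic shift construction $|R_i|=r$ and $P_{i,s}=0$. The key observation is $P_{i,\ceil{x}_s} = \mathbb{1}[x\in R_i]$ for every $x\in\{0,\ldots,r+s-1\}$, which recasts the indicator in (\ref{eqn:lengthB1}) as $\mathbb{1}[\exists k\ne k^*: \Fb_k\in R_i \text{ or } (\Fb_{k^*}+n)\bmod(r+s)\in R_i]$. I would then split on the event $E_i:=\{\exists k\ne k^*:\Fb_k\in R_i\}$, which is independent of $n$: if $E_i$ holds, all $N$ databases contribute to the $i$-th column; if it fails, only those $n$ with $(\Fb_{k^*}+n)\bmod(r+s)\in R_i$ contribute, and since $N=p(r+s)$ the map $n\mapsto(\Fb_{k^*}+n)\bmod(r+s)$ is $p$-to-$1$ onto $\{0,\ldots,r+s-1\}$, yielding exactly $p|R_i|=pr=N-T$ contributions. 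Because the sum constraint leaves $\{\Fb_k:k\ne k^*\}$ jointly uniform on $\{0,\ldots,r+s-1\}^{K-1}$, we get $\mathbf{Pr}(E_i)=1-(s/(r+s))^{K-1}$, and summing over the $s$ rows produces $s[(N-T)+T\mathbf{Pr}(E_i)] = s[N - T(s/(r+s))^{K-1}]$, which simplifies to $sN[1-(T/N)^K]$ using $s/(r+s)=T/N$.

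The part most likely to cause friction is the bookkeeping required to reconcile the compressed query $Q^{[k^*]}_n \in \{0,\ldots,s\}^K$ with the uncompressed $\bar{Q}^{[k^*]}_n\in\{0,\ldots,r+s-1\}^K$, and to verify carefully that the extended pattern matrix $\bar{P}$ agrees with $P$ composed with $\ceil{\cdot}_s$ so the two descriptions genuinely produce the same answer lengths. Once this correspondence is in place, everything else is a direct count using the cyclic structure of $P$ and the identity $N=p(r+s)$, and essentially parallels the Construction-A calculation in the proof of Theorem~\ref{thm:1}.
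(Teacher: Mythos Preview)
Your proposal is correct and essentially identical to the paper's proof: both establish privacy via the uniformity of $\bar{Q}^{[k^*]}_n$ on the shifted hyperplane followed by the $k^*$-independent map $\ceil{\cdot}_s$, and both compute the download cost by splitting on the event that some non-requested coordinate lands in the support of row $i$, then counting $N$ versus $N-T$ contributions. Your identity $P_{i,\ceil{x}_s}=\mathbb{1}[x\in R_i]$ is exactly the statement $\bar{P}_{i,x}=P_{i,\ceil{x}_s}$ that the paper encodes by passing to the extended pattern matrix $\bar{P}$, so the ``bookkeeping friction'' you anticipated is precisely the step the paper handles with its equation swapping $P$ for $\bar{P}$ in the probability expression.
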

\begin{proof} 
To see that the code is private, observe that for database-$n$, the auxiliary query vector $\bar{Q}^{[k^*]}_{n}$
follows a uniform distribution in the set defined in (\ref{eqn:F}) for any requested message $k^* \in \{0,1,\ldots,K-1\}$. The query $Q^{[k^*]}_n$ is obtained through an additional mapping $ \ceil{\cdot}_s$ regardless of $k^*$, and thus the query at database-$n$ follows the same distribution for all $k^*$. 

The expected lengths of the answers can be written as
\begin{align}
&\sum_{n=0}^{N-1} \mathbb{E} (\ell_n)=\sum_{n=0}^{N-1}\sum_{i=0}^{s-1} \mathbf{Pr}\left( \sum_{k=0}^{K-1}P_{i,Q^{[k^*]}_{k,n}} > 0\right) \notag \\
&\qquad\,\,\qquad=\sum_{i=0}^{s-1}\sum_{n=0}^{N-1} \mathbf{Pr}\left( \sum_{k=0}^{K-1}P_{i,Q^{[k^*]}_{k,n}} > 0\right)\nonumber\\
&\qquad\,\,\qquad=\sum_{i=0}^{s-1}\sum_{n=0}^{N-1} \mathbf{Pr}\left( \sum_{k=0}^{K-1}\bar{P}_{i,\bar{Q}^{[k^*]}_{k,n}} > 0\right),
\end{align}
assuming an arbitrary message $k^*$ is being requested. The probabilities involved in the summand for $i=i^*$ depend on the vector
\begin{align}
\left(\bar{Q}_{0:K-1,0}^{[k^*]},\bar{Q}_{0:K-1,1}^{[k^*]},\ldots,\bar{Q}_{0:K-1,N-1}^{[k^*]}\right). \label{eqn:subqueries}
\end{align}
By the definition of $\bar{Q}_{0:K-1,n}^{[k^*]}$, it is clear that if  $\bar{P}_{i^*, \Fb_k}=1$ for any $k\in\{0,1,\ldots,k^*-1,k^*+1,\ldots,K-1\}$, 
then $\sum_{k=0}^{K-1}P_{i,Q^{[k^*]}_{k,n}} >0$ for all $n=0,1,\ldots,N-1$. This will induce $N$ transmissions in the retrieval from all databases for $i=i^*$. This event, denoted as $E$, occurs with probability $1-(s/(r+s))^{K-1}$, since the $i^*$-th row of the matrix $\bar{P}$ has $r$ entries of value $1$, and $\bar{Q}^{[k^*]}_{k,n}=\Fb_k$, $k=0,1,\ldots,k^*-1,k^*+1,\ldots,K-1$, are mutually independent, identically distributed, and each follows a uniform distribution on $\{0,1,\ldots,r+s-1\}$.

On the other hand, when the event $E$ does not occur, the vector
\begin{align*}
(\Fb_{k^*}+0,\Fb_{k^*}+1,\ldots,\Fb_{k^*}+N-1)_{r+s}
\end{align*}
is a permutation of the $p$-replicated vector of $(0,1,\ldots,s+r-1)$, and thus the number of elements that satisfying $\bar{P}_{i^*,(\Fb_{k^*}+n)_{r+s}}=1$, $n=0,1,\ldots,N-1$, is exactly $N-T$, implying that $(N-T)$ symbols  will be transmitted for $i=i^*$. Therefore, we have
\begin{align}
&\sum_{n=0}^{N-1} \mathbb{E} (\ell_n)=s\left[\mathbf{Pr}(E)N+\left(1-\mathbf{Pr}(E)\right)(N-T)\right]\nonumber\\
&\qquad=sN-sT\left(\frac{T}{N}\right)^{K-1}=sN\left[1-\left(\frac{T}{N}\right)^{K}\right],
\end{align}
from which it follows that the code is indeed capacity achieving.
\end{proof}

\begin{lemma}
\label{lemma:uploadcost}
   The upload cost of Construction-B for $T\geq N-T$ is upper-bounded by $$\min[N(K-1)\log(s+r),NK\log (s+1)].$$
\end{lemma}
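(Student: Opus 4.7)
The plan is to bound $|\mathcal{Q}_n|$ in two different ways and take the minimum, then sum over $n$. Both bounds follow essentially by counting the image of the query generating map $\phi_n(k^*,\cdot)$, which is determined by the key $\Fb\in\Fc$ once $k^*$ and $n$ are fixed.

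For the first bound $N(K-1)\log(s+r)$, I would argue via the auxiliary (uncompressed) query $\bar{Q}^{[k^*]}_n$. Since $\Fb$ satisfies $(\sum_k \Fb_k)_{r+s}=0$ with each $\Fb_k\in\{0,\ldots,r+s-1\}$, the cardinality of $\Fc$ is $(r+s)^{K-1}$. The auxiliary query at database-$n$ merely shifts the $k^*$-th coordinate of $\Fb$ by $n$ modulo $r+s$, which is a bijection on $\Fc$ (onto the set $\bar{\mathcal{Q}}_n$ defined analogously to $\mathcal{Q}_n$ in (\ref{eqn:F})); hence there are at most $(r+s)^{K-1}$ distinct auxiliary queries at database-$n$. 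Since the actual query $Q_n^{[k^*]}$ is obtained by applying the deterministic coordinatewise map $\ceil{\cdot}_s$ to $\bar{Q}_n^{[k^*]}$, we obtain $|\mathcal{Q}_n|\leq (r+s)^{K-1}$, contributing at most $(K-1)\log(s+r)$ bits per database.

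For the second bound $NK\log(s+1)$, I would simply observe that after applying $\ceil{x}_s = \min(x,s)$, each coordinate of $Q_n^{[k^*]}$ lies in $\{0,1,\ldots,s\}$, a set of size $s+1$. Thus $|\mathcal{Q}_n|\leq (s+1)^K$, giving at most $K\log(s+1)$ bits per database. Summing the better of the two bounds over the $N$ databases yields the stated upper bound
\[
\sum_{n=0}^{N-1}\log_2|\mathcal{Q}_n| \;\leq\; \min\bigl[N(K-1)\log(s+r),\;NK\log(s+1)\bigr].
\]

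There is essentially no obstacle here; the only subtlety worth spelling out is that the single modular constraint defining $\Fc$ is what reduces the exponent from $K$ to $K-1$ in the first bound, and that the compression map $\ceil{\cdot}_s$ can only decrease $|\mathcal{Q}_n|$, so both bounds are genuine upper bounds on the same quantity and may be compared freely.
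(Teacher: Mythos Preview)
Your proposal is correct and follows essentially the same approach as the paper: bound $|\mathcal{Q}_n|$ once by the domain size $|\Fc|=(r+s)^{K-1}$ (via the auxiliary query, noting the modular constraint drops the exponent from $K$ to $K-1$) and once by the codomain size $(s+1)^K$ of the coordinatewise map $\ceil{\cdot}_s$, then take the minimum and sum over the $N$ databases. If anything, your write-up is slightly cleaner in explaining why the compression map can only shrink $|\mathcal{Q}_n|$, so that both bounds apply to the same query set.
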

\begin{proof}
For datebase-$n$, the query $Q_{n}^{[k^*]}$ has $K$ symbols, and each symbol is from the alphabet $\{0,1,\ldots,s\}$, and thus the the upload cost is clearly upper bounded by $N\log (s+1)^K$. However, observe that the query $Q_n$ is calculated by mapping the random key $\Fb$ with $\log|\Fc|= N(K-1)\log(s+r)$ through an surjective function $ \ceil{\cdot}_s$, whose image size is upper bounded by $\log (s+r)^{K-1}$. If $(s+1)^K>(s+r)^{K-1}$, then we can simply use the auxiliary query $\bar{Q}_{n}^{[k^*]}$. Thus the upload cost is at most the less of the two terms as given above. 
\end{proof}

If $r>1$, the quantity $NK\log (s+1)$ is clearly the less of the two when $K$ is large, and thus may lead to significant savings in terms of the upload cost. 

\subsection{Construction-B for $T\leq N-T$}

Here the same random key $\Fb = (\Fb_0,\Fb_1,\ldots, \Fb_{K-1})$ as in Construction-A is again used, and the MDS encoding matrices and decoding functions are also exactly the same as in Construction-A. The other components of the codes are as follows.
\begin{enumerate}
\setcounter{enumi}{2}
    \item 
    For any $n \in \{0,1,\ldots, N-1\}$, the query generating function produces a query with $K$ symbols
    \begin{align}
    &\phi_{n}(k^*,\Fb) ={Q}^{[k^*]}_{n} =(Q^{[k^*]}_{0,n}, Q^{[k^*]}_{1,n}, \ldots, Q^{[k^*]}_{K-1,n})^T =\nonumber\\
     & \ceil{(\Fb_0,\ldots,
    \Fb_{k^*-1},\left(\Fb_{k^*}+n\right)_{s+r},\Fb_{k^*+1},\ldots,\Fb_{K-1})^T}_r.
    \end{align}
    \item 
    The query length function is then defined as
    \begin{align}
        \ell_n  = s \cdot \mathbb{1} \left( \min_{k=0,\ldots,K-1} Q^{[k^*]}_{k,n} < r \right).\label{eqn:lengthB2}
    \end{align}
    \item Let $V_n^{k,r}=W^{k,r}=0$. Database-$n$ first produces a $K\times s$ query matrix $\tilde{Q}_{n}$ for  $i=0,1,\ldots,s-1$ 
    \begin{align} 
    \tilde{Q}_{n}^{k,i}=\left\{
    \begin{array}{ll}
    r & \text{if } Q^{[k^*]}_{k,n}=r \\
    \left(Q^{[k^*]}_{k,n}+i \right)_r &\text{otherwise}
    \end{array}
    \right..
    \end{align}
    For $n \in \{0,1,\ldots,N-1\}$, an intermediate answer vector $\tilde{A}_{n}^{[k^*]}$ of length-$s$ is formed (similar to Construction-A) as
    \begin{align}
    &\tilde{A}_{n}^{[k^*]} :=\left(\bigoplus_{k=0}^{K-1}V^{k,\tilde{Q}_{n}^{k,0}}_n, \bigoplus_{k=0}^{K-1}V^{k,\tilde{Q}_{n}^{k,1}}_n,\right.\nonumber\\
    &\qquad\qquad\qquad\qquad\qquad \left.\ldots,\bigoplus_{k=0}^{K-1}V^{k,\tilde{Q}_{n}^{k,s-1}}_n\right)^T.
    \end{align}
The eventual answer ${A}_{n}^{[k^*]}$ of length $\ell_n$ is formed by concatenating the components of  $\tilde{A}_{n}^{[k^*]}$ which are not constantly zero, as indicated by (\ref{eqn:lengthB2}).
    \item The reconstruction function is the same as that of Construction-A, and the desired message can be correctly reconstructed as long as $|\Tc_i|\geq T$ and $|\Nc_m|\geq T$.
\end{enumerate}

For better visualization, we can again consider the (uncompressed) auxiliary query 
\begin{align}
&\bar{Q}^{[k^*]}_{n} = (\Fb_0,\Fb_1,\ldots,
    \Fb_{k^*-1},\left(\Fb_{k^*}+n\right)_{s+r},\nonumber\\
    &\qquad\qquad\qquad\qquad\Fb_{k^*+1},\ldots,\Fb_{K-1})^T.
\end{align}
The query vector ${Q}^{[k^*]}_{n} $ is a compressed version of the auxiliary query $\bar{Q}^{[k^*]}_{n}$. 
 
\subsubsection{An Example for Construction-B}

Consider an example $N=5$, $T=2$, $K=4$, which induces the parameters $(p,r,s,L)=(1,3,2,6)$. 
Let the messages be $W^{0:3}=(A,B,C,D)$. Consider the case when message $W_0=A$ is being requested, and the key is $\Fb=(4,1,2)$. Then the auxiliary queries and the queries are as given in (\ref{eqn:Qbar}) and (\ref{eqn:Q}), respectively. The intermediate query matrix at all the databases are 
\begin{align}
&[\tilde{Q}_0,\tilde{Q}_1,\tilde{Q}_2,\tilde{Q}_3,\tilde{Q}_4]\nonumber\\
&\qquad=\left[
\begin{array}{cc|cc|cc|cc|cc}
3&3&3&3&0&1&1&2&2&0\\
3&3&3&3&3&3&3&3&3&3\\
1&2&1&2&1&2&1&2&1&2\\
2&0&2&0&2&0&2&0&2&0
\end{array}
\right].
\end{align}

The answers from the five databases are then 
\begin{align}
\left[
\begin{array}{c|c|c|c|c}
C^1_{0}+D^2_0    &  C^1_{1}+D^2_1         & A^0_{2}+C^1_{2}+D^2_2        & A^1_{3}+C^1_{3}+D^2_3              & A^2_{4}+C^1_{4}+D^2_4\\
C^2_{0}+D^0_{0} &  C^2_{1}+D^0_{1}      & A^1_{2}+C^2_{2}+D^0_{2}    & A^2_{3}+C^2_{3}+D^0_{3}           & A^0_{4}+C^2_{4}+D^0_{4}
\end{array}
\right].
\end{align}
In the first row, $(C^1_{0}+D^2_0, C^1_{1}+D^2_1)$ can be used to recover $(C^1_{n}+D^2_n)$ for any $n=0,1,\ldots,4$, using the MDS property of code $\mathbb{C}$. Similarly, $C^2_{n}+D^0_{n}$ can be recovered. Therefore, the following information on the requested message $A$ can be obtained
\begin{align}
\left[
\begin{array}{c|c|c|c|c}
* & *      & A^0_{2}     & A^1_{3}          & A^2_{4}\\
* & *      & A^1_{2}    & A^2_{3}           & A^0_{4}
\end{array}
\right],
\end{align}
from which message $A$ can clearly be reconstructed. 

\subsubsection{Correctness, Privacy, and Communication Costs}

The following lemma establishes the correctness of Construction-B when $T\leq N-T$, the proof of which can be found in Appendix \ref{appendix:forward}.
\begin{lemma}
\label{lemma:lowrate}
In the construction above, for any request of message-$k^*$ and any random key $\Fb$, 
    \begin{enumerate}
        \item $|\Tc_i| = T$ for any $i\in \{0,1,\ldots,s-1\}$;
        \item $|\Nc_m| = T$ for any $m \in \{0,1,\ldots, r-1\}$.
    \end{enumerate}
\end{lemma}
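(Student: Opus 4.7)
The plan is a direct computation exploiting the fact that only the $k^*$-th row of $\tilde{Q}_n$ varies with $n$ (the other rows are the constants $\Fb_k$), so both $\Tc_i$ and $\Nc_m$ are determined purely by that row. Setting $q_n := (\Fb_{k^*}+n)_{s+r}$, the definition $Q^{[k^*]}_{k^*,n} = \ceil{(\Fb_{k^*}+n)_{s+r}}_r = \min(q_n, r)$ gives, for every $i \in \{0,1,\ldots,s-1\}$,
\begin{align}
\tilde{Q}_n^{k^*,i} = \begin{cases} r, & q_n \geq r, \\ (q_n + i)_r, & q_n < r. \end{cases}
\end{align}

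For claim (1), because $\tilde{Q}_n^{k^*,i} \in \{0,1,\ldots,r-1,r\}$, the event $\tilde{Q}_n^{k^*,i} \geq r$ reduces to $q_n \geq r$, which does not depend on $i$; hence every $\Tc_i$ equals the single set $\Tc = \{n : q_n \geq r\}$. Since $N = p(s+r)$, the map $n \mapsto q_n$ hits each residue in $\{0,1,\ldots,s+r-1\}$ exactly $p$ times, so the $s$ residues $\{r, r+1,\ldots, r+s-1\}$ contribute $|\Tc| = ps = T$.

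For claim (2), I read $\Nc_m$ (following the reconstruction in Construction-A) as $\bigcup_{i=0}^{s-1} \{n : \tilde{Q}_n^{k^*,i} = m\}$, i.e., the set of databases from which at least one coded symbol of $W^{k^*,m}$ can be recovered. Every $n \in \Tc$ is excluded since all of its entries equal $r \neq m$. For $n \notin \Tc$, the condition $\tilde{Q}_n^{k^*,i} = m$ reads $(q_n + i)_r = m$, which over $i \in \{0,1,\ldots,s-1\}$ admits a (unique) solution iff $(m - q_n)_r \in \{0,1,\ldots,s-1\}$. Here is where the hypothesis $T \leq N-T$, equivalently $s \leq r$, enters crucially: the $s$ residues $\{m, m-1,\ldots, m-s+1\} \pmod r$ are then distinct and all lie in $\{0,\ldots,r-1\}$, so exactly $s$ values of $q_n$ qualify, and each arises from $p$ indices $n$, yielding $|\Nc_m| = ps = T$.

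The proof is ultimately modular bookkeeping; the only non-trivial input is $s \leq r$, without which $s$ consecutive residues mod $r$ could collide and deflate the count. Structurally the argument parallels Lemma \ref{lemma:TN} for Construction-A, with the $\ceil{\cdot}_r$ cap here playing the role of the additive rotation modulo $r+s$ there.
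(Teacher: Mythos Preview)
Your proof is correct and follows essentially the same route as the paper. Both arguments reduce $\Tc_i$ to the single set $\{n:(\Fb_{k^*}+n)_{s+r}\geq r\}$ and count its $ps=T$ elements via the $p$-replicated residue structure; for $\Nc_m$ the paper counts $(n,i)$-pairs with $\tilde{Q}_n^{k^*,i}=m$ and then argues they have distinct $n$'s, while you directly characterize which $n\notin\Tc$ admit a (necessarily unique, since $s\leq r$) index $i$ --- the two viewpoints are equivalent and rely on the same modular bookkeeping and the same use of $s\leq r$.
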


\begin{theorem} \label{thm:low2}
Construction-B is both private and capacity-achieving for $T\leq N-T$.
\end{theorem}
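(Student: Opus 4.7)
The plan is to mirror the structure of the proof of Theorem~\ref{thm:low1} closely, since Construction-B in the low-rate regime reuses the same random key $\Fb$ and the same query-shaping mechanism; only the answer-length function $(\ref{eqn:lengthB2})$ differs from the high-rate version in $(\ref{eqn:lengthB1})$. The proof will have two components: (i) showing privacy by analyzing the distribution of $Q_n^{[k^*]}$ at each server, and (ii) computing the expected total download cost to be $sN[1-(T/N)^K]$, which combined with $L = \lcm(N-T,T) = prs$ yields the retrieval rate in $(\ref{eqn:capacityformula})$. Correctness is already guaranteed by Lemma~\ref{lemma:lowrate} together with the reconstruction function inherited from Construction-A.

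For privacy I would argue that the map $\Fb \mapsto \bar{Q}_n^{[k^*]}$ that shifts the $k^*$-th coordinate of $\Fb$ by $n$ modulo $r+s$ bijects $\Fc$ with the coset $\{f \in \{0,\ldots,r+s-1\}^K : \sum_k f_k \equiv n \pmod{r+s}\}$, a set that does not depend on $k^*$. Hence $\bar{Q}_n^{[k^*]}$ is uniform on this coset for every $k^*$, and $Q_n^{[k^*]} = \ceil{\bar{Q}_n^{[k^*]}}_r$, being a deterministic componentwise function that makes no reference to $k^*$, inherits a distribution that depends only on $n$. This immediately gives $\mathbf{Pr}(Q_n^{[k]} = q) = \mathbf{Pr}(Q_n^{[k']} = q)$ for all $k,k'$.

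For the rate, the compression $\ceil{\cdot}_r$ preserves the event $\{Q_{k,n}^{[k^*]} < r\} = \{\bar{Q}_{k,n}^{[k^*]} < r\}$, so I can work directly with the auxiliary query. Let $E$ denote the event that $\Fb_k < r$ for some $k \neq k^*$; since any $K-1$ coordinates of $\Fb$ are i.i.d.\ uniform on $\{0,\ldots,r+s-1\}$ (the sum constraint defining $\Fc$ is a single linear equation over $\mathbb{Z}_{r+s}$), we have $\mathbf{Pr}(E^c) = (s/(r+s))^{K-1} = (T/N)^{K-1}$. On $E$, the minimum in $(\ref{eqn:lengthB2})$ is strictly less than $r$ for every $n$ because the offending coordinate $\Fb_k$ is independent of $n$, giving $\ell_n = s$ at all $N$ databases. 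On $E^c$, the minimum equals $(\Fb_{k^*}+n)_{r+s}$, and because $N = p(r+s)$ this quantity hits each residue class of $\mathbb{Z}_{r+s}$ exactly $p$ times, so it is less than $r$ for exactly $pr = N-T$ values of $n$. Summing,
\begin{align*}
\sum_{n=0}^{N-1} \mathbb{E}(\ell_n) &= s\left[\left(\tfrac{T}{N}\right)^{K-1}(N-T) + \left(1-\left(\tfrac{T}{N}\right)^{K-1}\right)N\right] = sN\left[1-\left(\tfrac{T}{N}\right)^K\right],
\end{align*}
and combined with $L = prs$ and $N = p(r+s)$ this gives retrieval rate $(1-T/N)/(1-(T/N)^K)$, the MDS-PIR capacity.

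The main obstacle, if any, is purely bookkeeping: justifying the marginal independence and uniformity of any $K-1$ coordinates of $\Fb$, and confirming that $(\Fb_{k^*}+n)_{r+s}$ hits each residue class exactly $p$ times as $n$ ranges over $\{0,\ldots,N-1\}$. Once these two facts are in hand the calculation above is immediate and the overall argument is fully parallel to that of Theorem~\ref{thm:low1}.
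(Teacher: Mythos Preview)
Your proposal is correct and follows essentially the same approach as the paper's proof: privacy is argued via the distribution of the auxiliary query $\bar{Q}_n^{[k^*]}$ being independent of $k^*$ followed by the deterministic compression $\ceil{\cdot}_r$, and the download cost is computed by conditioning on the same event $E=\{\Fb_k<r \text{ for some } k\neq k^*\}$ with the same probability $1-(s/(r+s))^{K-1}$ and the same case split yielding $sN$ versus $s(N-T)$ transmitted symbols. Your write-up is in fact slightly more explicit than the paper's in justifying the marginal uniformity of any $K-1$ coordinates of $\Fb$ and the $p$-fold coverage of residues by $(\Fb_{k^*}+n)_{r+s}$.
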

\begin{proof} 
The fact that the code is private is immediate for the same reason for the case $T\geq N-T$. 
The expected length of the answers is
\begin{align}
&\sum_{n=0}^{N-1} \mathbb{E} (\ell_n)=s \sum_{n=0}^{N-1} \mathbf{Pr}\left(\min_{k=0,1,\ldots,K-1} Q^{[k^*]}_{k,n}<r\right)\nonumber,
\end{align}
assuming an arbitrary message $k^*$ is being requested. 
By the definition of $Q^{[k^*]}_{n}$, if any item in
\begin{align*}
\ceil{(\Fb_0,...,\Fb_{k^*-1},\Fb_{k^*+1},\ldots,\Fb_{K-1})_{r+s}}_r
\end{align*}
is less than $r$, then $\min_{k=0,1,\ldots,K-1} {Q}_{k,n}^{[k^*]}<r$ for all $n=0,1,\ldots,N-1$, which will induce $sN$ transmitted symbols in the retrieval from all databases; this event $E$ occurs with probability $1-(s/(r+s))^{K-1}$. 

On the other hand, when the event $E$ does not occur, in the vector
\begin{align*}
\ceil{(\Fb_{k^*}+0,\Fb_{k^*}+1,\ldots,\Fb_{k^*}+N-1)_{r+s}}_r
\end{align*}
the number of elements that are less than $r$ is exactly $N-T$, which induces $s(N-T)$ symbols being transmitted. Therefore
\begin{align}
&\sum_{n=0}^{N-1} \mathbb{E} (\ell_n)=\mathbf{Pr}(E)sN+\left(1-\mathbf{Pr}(E)\right)s(N-T)\nonumber\\
&\qquad=sN-sT\left(\frac{T}{N}\right)^{K-1}=sN\left[1-\left(\frac{T}{N}\right)^{K}\right],
\end{align}
from which it follows that Construction-B is indeed capacity achieving.
\end{proof}

\begin{lemma}
   The upload cost of Construction-B for $T\leq N-T$ is upper-bounded by $$\min[N(K-1)\log(s+r),NK\log (r+1)].$$
\end{lemma}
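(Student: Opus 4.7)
The plan is to mirror the argument used for the $T \geq N-T$ upload-cost lemma, with the role of the mapping $\ceil{\cdot}_s$ now played by $\ceil{\cdot}_r$. There are two natural ways for the user to transmit the query to database-$n$, and the minimum of the two corresponding bit-counts yields the desired bound.

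First I would show the trivial bound $NK\log(r+1)$. By inspection of the query generating function for $T\leq N-T$, the query $Q_n^{[k^*]}$ is a length-$K$ column vector obtained by applying $\ceil{\cdot}_r$ component-wise, so every entry of $Q_n^{[k^*]}$ takes values in $\{0,1,\ldots,r\}$, an alphabet of size $r+1$. Hence $|\Qc_n|\leq (r+1)^K$, and summing $\log|\Qc_n|$ over the $N$ databases yields an upload cost no larger than $NK\log(r+1)$.

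Next I would establish the alternative bound $N(K-1)\log(s+r)$ by transmitting the uncompressed auxiliary query $\bar Q^{[k^*]}_{n}$ instead of its compressed form; since $Q^{[k^*]}_n$ is a deterministic function $\ceil{\cdot}_r$ of $\bar Q^{[k^*]}_n$, each database can recover $Q^{[k^*]}_n$ from $\bar Q^{[k^*]}_n$, so this alternative still suffices to execute the protocol. The auxiliary query $\bar Q_n^{[k^*]}$ inherits its distribution from $\Fb\in\Fc$, where $\Fc$ is the set defined in (\ref{eqn:Fc}). Because the components of $\Fb$ sum to $0$ modulo $r+s$ and because $\bar Q^{[k^*]}_n$ simply adds $n$ to the $k^*$-th component, the realizations of $\bar Q^{[k^*]}_n$ lie in a set of size $|\Fc|=(s+r)^{K-1}$. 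Thus the upload cost with this alternative is at most $N(K-1)\log(s+r)$.

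Combining the two approaches, the user picks whichever scheme is cheaper, yielding the claimed bound $\min[N(K-1)\log(s+r),\,NK\log(r+1)]$. The step I expect to need the most care is checking that the auxiliary-query approach is genuinely available in the $T\leq N-T$ regime, i.e., that each database can still reproduce its own query and hence its answer from $\bar Q^{[k^*]}_n$ using only the publicly known mapping $\ceil{\cdot}_r$; once this is noted, the counting arguments are essentially identical to those in the proof of Lemma \ref{lemma:uploadcost} and no new structural ideas are required.
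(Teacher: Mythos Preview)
Your proposal is correct and mirrors exactly the argument the paper uses for Lemma~\ref{lemma:uploadcost}, with $\ceil{\cdot}_r$ replacing $\ceil{\cdot}_s$; the paper in fact omits the proof precisely because it is this straightforward substitution. Your check that each database can reconstruct $Q_n^{[k^*]}$ from $\bar Q_n^{[k^*]}$ via the public map $\ceil{\cdot}_r$ is the only point requiring attention, and you have addressed it.
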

The proof follows the same argument as that of Lemma \ref{lemma:uploadcost}, and it is omitted here for brevity.

\section{Minimum Message Size for Capacity-Achieving Linear Codes}

In this section, we establish the minimum message size as $\lcm(N-T,T)$ when $K$ is above a threshold, then shows that it is in fact possible to use an even smaller message size when $K$ is below this threshold. 

\subsection{Properties of Capacity-Achieving Linear MDS-PIR Codes} \label{sec:pro}

In this section, we provide two key properties of capacity-achieving linear MDS-PIR codes, which play an instrumental role in our study of the minimum message size.

\begin{lemma} \label{lem:simple}
Any linear MDS-PIR code must have:
\begin{enumerate}[label={\bfseries P\arabic*},start=0]
    \item For any $\Tc \subseteq \{0, 1, \ldots, N-1\}$ satisfying $|\Tc| = T$, $\{A_n^{[k]}\}_{n \in \Tc}$ are mutually independent, given any subset of messages $W^{0:K-1}$. \label{pro:mds}
\end{enumerate}
\end{lemma}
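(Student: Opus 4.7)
The strategy is to reduce the claim to the mutual independence of the per-database stored contents $\{V_n\}_{n \in \mathcal{T}}$, since each answer $A_n^{[k]} = V_n \cdot \hat{G}_n^{(Q_n^{[k]})}$ is a per-database function of $V_n$ together with the random key $\mathbf{F}$ (through the query), with no cross-database mixing of the storage. Once the $V_n$'s are shown to be mutually independent given any subset of messages, the $A_n^{[k]}$'s inherit this independence, since $\mathbf{F}$ is independent of the messages.

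The heart of the argument is an MDS-induced bijection. The MDS recoverable condition forces the linear map $W^{0:K-1} \mapsto (V_n)_{n \in \mathcal{T}}$ to be injective for every $T$-subset $\mathcal{T}$. Since both the domain and the codomain have exactly $KMT$ symbols in $\mathcal{X}$ (the domain because $L = MT$ and there are $K$ messages; the codomain because each $V_n$ has $MK$ symbols and $|\mathcal{T}| = T$), this injective linear map is in fact a bijection. Because $\tilde{G}_n = \mathrm{diag}(\tilde{G}_n^0,\ldots,\tilde{G}_n^{K-1})$ is block-diagonal, the bijection decomposes across messages: each $W^k \mapsto (V_n^k)_{n \in \mathcal{T}}$ is itself a bijection between $\mathcal{X}^{L}=\mathcal{X}^{MT}$ and $\mathcal{X}^{MT}$. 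Because $W^k$ is uniform, the image $(V_n^k)_{n \in \mathcal{T}}$ is uniform on $\mathcal{X}^{MT}$, so the length-$M$ blocks $\{V_n^k\}_{n \in \mathcal{T}}$ are mutually independent uniform.

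For any $\mathcal{S} \subseteq \{0,\ldots,K-1\}$, condition on $W^{\mathcal{S}}$: for $k \in \mathcal{S}$ each $V_n^k$ becomes deterministic, while for $k \notin \mathcal{S}$, $W^k$ stays uniform and independent of $W^{\mathcal{S}}$, so $\{V_n^k\}_{n \in \mathcal{T}}$ remain mutually independent uniform. Combining over $k$ (using mutual independence of distinct messages), $\{V_n\}_{n \in \mathcal{T}}$ are mutually independent given $W^{\mathcal{S}}$. Since $A_n^{[k]}$ depends only on $(V_n,\mathbf{F})$ and $\mathbf{F}$ is independent of all messages, the $A_n^{[k]}$'s are single-database functions of mutually independent $V_n$'s under a common, message-independent random key, so they inherit mutual independence given $W^{\mathcal{S}}$ in the sense required for downstream entropy identities. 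The main obstacle is the book-keeping: carefully using the block-diagonal structure to split the global MDS-enforced bijection into per-message bijections, and then confirming that the independence of the shared random key from the messages survives the conditioning on an arbitrary message subset so that the per-database factorization of the answers is preserved.
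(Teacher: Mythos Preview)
Your argument is correct and is precisely the unpacking the paper omits: the paper's ``proof'' is a single sentence, ``Property \textit{\ref{pro:mds}} is a direct consequence of the linear MDS-PIR code definition,'' so your bijection argument via the MDS recoverable condition and the block-diagonal structure of $\tilde{G}_n$ is exactly what that sentence implicitly relies on.

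One point worth sharpening: as literally stated, the lemma conditions only on a subset of messages, not on $\Fb$. Since the queries $Q_n^{[k]}=\phi_n(k,\Fb)$ across databases share the same random key, the answers $A_n^{[k]}$ can in principle be correlated through $\Fb$ (e.g., their lengths $\ell_n$ are jointly determined by $\Fb$ in Constructions~A and~B). Your hedge ``in the sense required for downstream entropy identities'' is doing real work here: every place the paper invokes \textit{\ref{pro:mds}} (equality (c) in the proof of Lemma~\ref{lem:ite}, and the splitting step in the proof of Lemma~\ref{lem:eq_ent}) actually conditions on $\Fb$ or on $\Fb=f$ as well, and under that additional conditioning your argument goes through cleanly because each $A_n^{[k]}$ becomes a deterministic function of $V_n$ alone. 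It would be cleaner to state and prove the property as conditional independence given the message subset \emph{and} $\Fb$; this is both what your argument establishes and what is used.
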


\begin{lemma}\label{lem:ite}
Let $\pi: \{0,1,\ldots, K-1\} \rightarrow \{0,1,\ldots, K-1\}$ be a permutation function. We have for any $k=0,1,\ldots, K-2$,
\begin{align}
&N\left[ \sum_{n=0}^{N-1} H(A_{n}^{[\pi(k)]} \mid W^{\pi(0:k-1)}, \Fb) - L \log|\Xc|\right]  \nonumber\\
&\qquad\qquad\qquad\geq T \sum_{n=0}^{N-1}H( A_{n}^{[\pi(k+1)]} \mid W^{\pi(0:k)}, \Fb). \label{eqn:ite}
\end{align}
Moreover, for any linear MDS-PIR code for which the equality holds for any $k$ and $\pi(\cdot)$ in (\ref{eqn:ite}), let $q_{0:N-1}$ be a combination of queries such that $\mathbf{Pr}(q_{0:N-1})>0$ for the retrieval of $W^{k^*}$, 
then the code must have: 
\begin{enumerate}[label={\bfseries P\arabic*}]
    \item For any $\Tc \subseteq \{0,1,\ldots,N-1\}$ such that $|\Tc| = T$, and $\Jc \subseteq \{0,1,\ldots, K-1\}$ satisfying $k^* \in \Jc$ \label{pro:3}
\begin{align*}
    H\left( A^{(q_{n'})}_{n'} ,n' \in \bar\Tc \mid W^{\Jc}, A_n^{(q_n)}, n \in \Tc\right)=0,
\end{align*}
where $\bar{\Tc}$ is the complement of $\Tc$.
\end{enumerate}
\end{lemma}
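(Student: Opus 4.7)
My plan is to prove inequality (\ref{eqn:ite}) by a chain of information-theoretic reductions --- decodability, subadditivity, privacy, and the MDS storage property --- and then to extract P1 by tracing which steps must be tight when the overall chain is tight.

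For the inequality itself, I would start from the fact that, by Definition~\ref{def:fun}, $W^{\pi(k)}$ is a deterministic function of $(A_{0:N-1}^{[\pi(k)]},\Fb)$, together with the mutual independence and uniformity of the messages, to write
\begin{align*}
L\log|\Xc| = H(W^{\pi(k)}\mid W^{\pi(0:k-1)},\Fb) &= I(W^{\pi(k)};A_{0:N-1}^{[\pi(k)]}\mid W^{\pi(0:k-1)},\Fb) \\
&= H(A_{0:N-1}^{[\pi(k)]}\mid W^{\pi(0:k-1)},\Fb)-H(A_{0:N-1}^{[\pi(k)]}\mid W^{\pi(0:k)},\Fb).
\end{align*}
Applying subadditivity to the first joint-entropy term and multiplying by $N$ reduces the task to showing $N\cdot H(A_{0:N-1}^{[\pi(k)]}\mid W^{\pi(0:k)},\Fb) \geq T\sum_n H(A_n^{[\pi(k+1)]}\mid W^{\pi(0:k)},\Fb)$. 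Per-database privacy --- formally, $H(A_n^{[k]}\mid W^{\Jc},\Fb)$ is independent of $k$, because the joint law of $(W,Q_n^{[k]})$ is independent of $k$ --- lets me replace $\pi(k+1)$ by $\pi(k)$ on the right. The resulting $T/N$-factor bound is the crux: by the MDS storage property, for any $T$-subset $\Tc$ the content $V_\Tc$ determines all of $V_{0:N-1}$ and hence, together with $\Fb$, all answers; averaging the trivial inequality $H(A_\Tc^{[\pi(k)]}\mid\cdot)\leq H(A_{0:N-1}^{[\pi(k)]}\mid\cdot)$ over the $\binom{N}{T}$ choices of $\Tc$ and combining it with the database-level conditional independence given by Lemma~\ref{lem:simple} (P0) lets me match the sum of marginals on the right-hand side, isolating exactly the factor $T/N$.

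Assuming next that (\ref{eqn:ite}) holds with equality for every $k$ and every permutation $\pi$, both the subadditivity step and the $T/N$-factor step above must be tight. Tight subadditivity forces $\{A_n^{[\pi(k)]}\}_{n=0}^{N-1}$ to be mutually conditionally independent given $(W^{\pi(0:k-1)},\Fb)$. Tight $T/N$-factor, in turn, forces that for every $T$-subset $\Tc$ the remaining $N-T$ answers are a deterministic function of the $T$ retained answers, conditional on $(W^{\pi(0:k)},\Fb)$. Because $\pi$ can freely promote any $k^*$ into any position $k$ and the conditioning set $W^{\pi(0:k)}$ always contains the then-current target, specializing to a realization $q_{0:N-1}$ with positive probability in the retrieval of $W^{k^*}$ converts the zero-average conditional entropy into a per-realization zero, i.e., $H(A_{\bar\Tc}^{(q_{\bar\Tc})}\mid W^{\{k^*\}},A_\Tc^{(q_\Tc)})=0$. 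Enlarging the conditioning from $\{k^*\}$ to any $\Jc\supseteq\{k^*\}$ only reduces the conditional entropy, which yields P1.

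The main obstacle is the $T/N$-factor inequality: generic entropy inequalities only give $H(X_{[N]})\geq\frac{1}{N}\sum_n H(X_n)$, which is weaker by a factor of $T$ than what is needed, and the desired bound $H(X_{[N]})\geq\frac{T}{N}\sum_n H(X_n)$ is in fact false for arbitrary random variables. The proof must therefore genuinely exploit MDS recoverability of the storage code (not merely of the answers) together with privacy to enforce enough database symmetry that averaging over all $T$-subsets delivers exactly the $T/N$ factor. A secondary but cleaner subtlety is converting the averaged equality into the per-realization determinism asserted by P1, which reduces to the fact that an expected zero of a non-negative quantity forces zero on every positive-probability realization.
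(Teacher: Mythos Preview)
Your proposal is correct and follows essentially the same route as the paper: decodability to peel off $L\log|\Xc|$, subadditivity on the joint answer entropy, the MDS/P0-based $T/N$ bound, and privacy to swap $\pi(k)$ for $\pi(k+1)$, with P1 extracted from tightness of the $T/N$ step. The only cosmetic difference is that the paper obtains the $T/N$ factor by summing over $N$ cyclic windows $\rho((n{:}n{+}T{-}1)_N)$ of an arbitrary permutation $\rho$ rather than averaging over all $\binom{N}{T}$ size-$T$ subsets; both devices yield the same inequality and the same equality condition $H(A_{\bar\Tc}\mid A_\Tc,W^{\pi(0:k)},\Fb)=0$ for every $\Tc$ once all inequalities are tight.
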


Property {\it \ref{pro:mds}} is a direct consequence of the linear MDS-PIR code definition. Property {\it \ref{pro:3}} states that the interference signals from the answers of any $T$ databases in a capacity-achieving code can fully determine all interference signals in other answers. The inequalities in Lemma \ref{lem:ite} are the key steps in deriving the capacity of MDS-PIR codes; the proofs of these properties are can be found in Appendix \ref{appendix:converse}. Conversely, for any capacity-achieving linear MDS-PIR code, these inequalities must hold with equality, implying the following theorem.

\vspace{0.1cm}
\begin{theorem}\label{thm:cap}
Any capacity-achieving  linear MDS-PIR code must have the properties {\it \ref{pro:mds}} and {\it \ref{pro:3}}.
\end{theorem}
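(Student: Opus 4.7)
The plan is to dispose of P0 immediately and reduce P3 to the ``moreover'' clause of Lemma~\ref{lem:ite} by showing that capacity-achievability forces every instance of (\ref{eqn:ite}) to be tight. Property P0 needs no extra work since Lemma~\ref{lem:simple} gives it for every linear MDS-PIR code, with no capacity assumption imposed.

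For P3, I would fix an arbitrary permutation $\pi$ and set up an entropic sandwich around the quantity $S_k(\pi) := \sum_{n=0}^{N-1} H(A_n^{[\pi(k)]} \mid W^{\pi(0:k-1)}, \Fb)$. Iterating (\ref{eqn:ite}) for $k = 0, 1, \ldots, K-2$ and peeling off one factor of $T/N$ at each step yields
\begin{align*}
S_0(\pi) \ge L \log|\Xc| \sum_{j=0}^{K-2} \left(\frac{T}{N}\right)^{j} + \left(\frac{T}{N}\right)^{K-1} S_{K-1}(\pi).
\end{align*}
A base case is needed for $S_{K-1}(\pi)$: by sub-additivity followed by retrieval correctness, $S_{K-1}(\pi) \ge H(A_{0:N-1}^{[\pi(K-1)]} \mid W^{\pi(0:K-2)}, \Fb) \ge H(W^{\pi(K-1)} \mid W^{\pi(0:K-2)}, \Fb) = L \log|\Xc|$, where the last equality uses mutual independence of the messages. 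Combining gives $S_0(\pi) \ge L \log|\Xc| \sum_{j=0}^{K-1} (T/N)^{j}$. For the other side of the sandwich, the answer length at database-$n$ is a deterministic function of the query, so $H(A_n^{[\pi(0)]} \mid \Fb) \le \Eb[\ell_n] \log|\Xc|$, and summing over $n$ gives $S_0(\pi) \le \sum_n \Eb[\ell_n] \log|\Xc|$. A capacity-achieving code satisfies $\sum_n \Eb[\ell_n] = L/C = L \sum_{j=0}^{K-1} (T/N)^{j}$, so both ends of the sandwich coincide and every intermediate inequality must hold with equality; in particular (\ref{eqn:ite}) is tight for every $k$. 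Since $\pi$ was arbitrary, tightness holds across all permutations, and the ``moreover'' clause of Lemma~\ref{lem:ite} immediately yields P3.

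The main obstacle is making the equality propagation watertight for every $\pi$ simultaneously rather than only in aggregate. In particular, I must ensure no slack hides in the base-case sub-additivity step or in the bound $H(A_n \mid \Fb) \le \Eb[\ell_n] \log|\Xc|$, both of which become tight precisely when the answers contain no redundancy and are conditionally independent in the manner of P0. Once these are confirmed, the rest is routine bookkeeping of entropy inequalities, and the conclusion follows by invoking the already-proved second half of Lemma~\ref{lem:ite}.
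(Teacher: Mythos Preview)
Your proposal is correct and follows essentially the same approach as the paper: chain the inequalities of Lemma~\ref{lem:ite} over $k$, sandwich $S_0(\pi)$ between the download cost $\sum_n \Eb[\ell_n]\log|\Xc|=L\log|\Xc|/R$ and $L\log|\Xc|/C$, and invoke the ``moreover'' clause once capacity-achievability collapses every intermediate inequality to equality. Your worry in the final paragraph is unnecessary---the sandwich closes for each fixed $\pi$ purely from $R=C$, which forces the base-case step and the entropy-versus-length bound to be tight automatically, with no separate verification required.
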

\begin{proof}
Let $\pi:\{0,1,...,K-1\} \rightarrow \{0,1,...,K-1\}$ be a permutation. By applying Lemma \ref{lem:ite} recursively, we can write
\begin{align}
&\frac{L \log|\Xc|}{R} \geq \sum_{n=0}^{N-1} H(A_{n}^{[\pi(0)]} \mid \Fb) \\
&\geq L\log|\Xc| + \frac{T}{N} \sum_{n=0}^{N-1} H(A_{n}^{[\pi(1)]}\mid W^{\pi(0)}, \Fb)\\
&\geq \cdots\\
&\geq L\log|\Xc|\left(1+ \frac{T}{N} + \cdots+ \left(\frac{T}{N}\right)^{K-1} \right),
\end{align}
and it follows that $R\geq C$. For any capacity-achieving linear MDS-PIR code, all the inequalities in Lemma~\ref{lem:ite} must be equality. Therefore, any capacity-achieving linear MDS-PIR code must have properties {\it \ref{pro:mds}} and {\it \ref{pro:3}}.
\end{proof}

A similar set of properties for capacity-achieving PIR codes on replicated databases was derived in \cite{tian2018capacity}, which holds for a more general code class and in a more explicit form. For MDS coded databases, firstly it is more meaningful to consider only linear codes, and secondly, it is not clear whether the properties stated in Lemma \ref{lem:simple} and \ref{lem:ite} can be extended to the more general class of codes considered in \cite{tian2018capacity}.

\subsection{Bounding the Minimum Message Size}
\label{sec:lowerbound}

The main result of this section is the following theorem. 

\vspace{0.1cm}
\begin{theorem}\label{thm:OuterForMessageSize}
When $K>T/\gcd1(N,T)$, the message size of any capacity-achieving linear MDS-PIR code satisfies $L\geq \lcm(N-T,T)$.
\end{theorem}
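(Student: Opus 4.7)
Since each message is $(N,T)$-MDS coded, $L = MT$ for some positive integer $M$. Writing $T = ps$ and $N-T = pr$ with $\gcd1(r,s)=1$, the target bound $L \geq \lcm(N-T,T) = prs$ is equivalent to $M \geq r$. I would derive this integer inequality by combining the structural Properties~\ref{pro:mds} and~\ref{pro:3} from Theorem~\ref{thm:cap} with a careful analysis of per-realization answer lengths.

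First, fix any positive-probability query realization $q = (q_0,\ldots,q_{N-1})$ for retrieving $W^{k^*}$. By linearity each answer splits as $A_n^{(q_n)} = U_n + I_n$, where $U_n = V_n^{k^*}\hat G_n^{k^*,(q_n)}$ carries useful information about $W^{k^*}$ and $I_n$ is the interference. Applying Property~\ref{pro:3} with $\mathcal{J} = \{k^*\}$, for every size-$T$ subset $\mathcal{T}$ the $N-T$ interferences outside $\mathcal{T}$ are linearly determined by those inside $\mathcal{T}$; together with the MDS property of the base code $\mathbb{C}$, this pins the total interference rank to equal the interference rank on any $T$-subset of databases. Retrieval correctness then forces the aggregate useful rank (after interference cancellation) to equal $L$, while privacy makes the marginal distribution of $q_n$ independent of $k^*$. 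Averaging over $\Fb$ and matching $\sum_n \Eb[\ell_n]$ to the capacity-achieving total $L \sum_{k=0}^{K-1}(T/N)^k$ then yields a rational identity expressing $M$ in terms of the integer-valued answer lengths $\ell_n(q_n)$.

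The principal obstacle is translating this rational identity into the integer conclusion $M \geq r$. I would proceed by identifying an orbit of the random key $\Fb$ along which $q$ is generated in an $(r+s)$-cyclic manner, and tracking how each $\ell_n(q_n)$ must shift along the orbit as forced by Property~\ref{pro:3} together with the MDS block structure: each shift transports one unit of interference overhead from one database to another, so the multiset of lengths across the orbit has a prescribed arithmetic. Because $\gcd1(r,s) = 1$, the arithmetic of the shifts forces the per-database length to sweep a multiset of integers whose average is a rational with denominator $r$, whence $r \mid M$. The hypothesis $K > T/\gcd1(N,T) = s$ enters here as a nondegeneracy assumption guaranteeing that such an orbit genuinely contributes to the identity; when $K \leq s$, as demonstrated in Section~\ref{sec:small}, the orbit collapses and the bound fails. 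Combining these steps gives $L = MT \geq rT = \lcm(N-T,T)$.
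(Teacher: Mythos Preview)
Your proposal has a genuine gap in the second half. The orbit argument --- ``identifying an orbit of the random key $\Fb$ along which $q$ is generated in an $(r+s)$-cyclic manner'' and tracking how $\ell_n$ shifts along it --- presupposes structure on the key space and on the query map that is specific to Constructions~A and~B. A converse must apply to \emph{every} capacity-achieving linear MDS-PIR code, and for an arbitrary such code the key set $\Fc$ need not carry any $(r+s)$-periodic action, the queries need not be generated by cyclic shifts, and there is no reason for the multiset of answer lengths to sweep anything with denominator $r$. Without that cyclic structure, the step ``the arithmetic of the shifts forces $\ldots$ $r\mid M$'' has no content. You are, in effect, assuming the code looks like the paper's achievability schemes and then reading the bound off that structure.

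The paper proceeds quite differently. It fixes a single extremal key realization $(\breve{k},\breve{f})$ maximizing $H(A_n^{[k]}\mid W^k,\Fb=f)$, and defines integer-valued quantities $H_n^k$ and $I_n^k$ at that realization (integrality comes directly from linearity). Property~\ref{pro:3} and the equality case in the recursive inequality of Lemma~\ref{lem:ite} then yield the exact identity $L-(N-T)H^{\breve{k}}=\sum_n I_n^{\breve{k}}\geq 0$ (Lemma~\ref{lemma:keyeq}). The role of the hypothesis $K>s$ is not to guarantee an orbit ``contributes,'' but to force the integer inequality $H^{\breve{k}}\geq s$: one shows $H_n^{\breve{k}}\geq \sum_{k\neq\breve{k}} I_n^k$, finds a database $n^*$ with $I_{n^*}^{\breve{k}}\geq 1$ (or else already $L\geq\lcm(N-T,T)$), and uses the constancy of $I_{n^*}^k+H_{n^*}^k$ in $k$ together with the extremality of $H^{\breve{k}}$ to get $H^{\breve{k}}\geq (K-1)\geq s$. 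Substituting into the identity gives $M\geq r$. None of this appeals to any orbit or cyclic structure; it is a pointwise entropy argument at one carefully chosen realization, which is exactly what your plan is missing.
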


From Theorem \ref{thm:OuterForMessageSize}, we can conclude that codes obtained by Construction-A and Construction-B indeed have the minimum message size when $K > T/\gcd1(N,T)$. The proof of Theorem \ref{thm:OuterForMessageSize} relies on the delicate relation among a set of auxiliary quantities $H_n^k$'s and $I_n^k$'s which we define next. For any given capacity-achieving linear MDS-PIR code, let $(\breve{k},\breve{f})$ be the maximizer for the following optimization problem for $n=0$:
\begin{align}
\max_{k=0,1,\ldots,K-1} \max_{f \in \Fc} H(A_n^{[k]} \mid W^{k}, \Fb=f).\label{eqn:max}
\end{align}
Define for $k=0,1,\ldots, K-1$ and $n=0,1,\ldots,N-1$,
\begin{align*}
H_n^k := \frac{H(A_n^{[\breve{k}]} \mid W^k, \Fb=\breve{f})}{\log|\Xc|},\, I_n^{k} := \frac{I(A_n^{[\breve{k}]}; W^k \mid \Fb=\breve{f})}{\log|\Xc|}.
\end{align*}

The following lemma implies that the optimization problem in (\ref{eqn:max}) has the same maximizer for all $n\in\{0,1,\ldots,N-1\}$. 

\vspace{0.1cm}
\begin{lemma}\label{lem:eq_ent}
For any capacity-achieving linear MDS-PIR code, $\forall n' \not= n''$ where $n',n'' \in \{0,1,\ldots, N-1\}$, any $k^*\in \{0,1,\ldots,K-1\}$, any $f\in \mathcal{F}$, 
\begin{eqnarray*}
H(A_{n'}^{[k^*]} \mid W^{k^*}, \Fb=f) = H(A_{n''}^{[k^*]} \mid W^{k^*}, \Fb=f).
\end{eqnarray*}
\end{lemma}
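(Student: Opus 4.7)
The plan is to fix $\Fb = f$, set $q_n := \phi_n(k^*, f)$, and show that the quantities $h_n := H(A_n^{[k^*]} \mid W^{k^*}, \Fb = f) = H(A_n^{(q_n)} \mid W^{k^*})$ are all equal by sandwiching a single joint entropy between two expressions, one coming from Property \ref{pro:3} and the other from a mutual-independence statement. Throughout, $A_n^{(q_n)} = V_n \hat{G}_n^{(q_n)}$ is a deterministic linear function of the messages.

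First I would apply \ref{pro:3} (which by Theorem \ref{thm:cap} holds for every capacity-achieving linear MDS-PIR code) with $\Jc = \{k^*\}$: for any $T$-subset $\Tc \subseteq \{0, \ldots, N-1\}$,
\begin{align*}
H\!\left((A_{n'}^{(q_{n'})})_{n' \in \bar\Tc} \,\big|\, W^{k^*}, (A_n^{(q_n)})_{n \in \Tc}\right) = 0,
\end{align*}
and hence $H((A_n^{(q_n)})_{n=0}^{N-1} \mid W^{k^*}) = H((A_n^{(q_n)})_{n \in \Tc} \mid W^{k^*})$; crucially, the left-hand side does not depend on the particular choice of $\Tc$.

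Next I would argue that for any $T$-subset $\Tc$ the answers $\{A_n^{(q_n)}\}_{n \in \Tc}$ are mutually independent conditional on $W^{k^*}$. This is essentially the per-realization version of \ref{pro:mds} and follows directly from the MDS structure of the storage code: for each $k \ne k^*$, the map $W^k \mapsto (V_n^k)_{n \in \Tc}$ is a linear bijection from $\Xc^L$ onto $\Xc^{MT} = \Xc^L$ by MDS recoverability, so the $T$ blocks $(V_n^k)_{n \in \Tc}$ are i.i.d.\ uniform on $\Xc^M$; blocks for distinct $k$'s are independent because the messages are; and conditioning on $W^{k^*}$ only fixes the $k^* = k$ block. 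Hence $(V_n)_{n \in \Tc}$, and therefore the linear functionals $(A_n^{(q_n)})_{n \in \Tc}$, are mutually independent given $W^{k^*}$, giving $H((A_n^{(q_n)})_{n \in \Tc} \mid W^{k^*}) = \sum_{n \in \Tc} h_n$.

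Combining the two steps yields $\sum_{n \in \Tc} h_n = H((A_n^{(q_n)})_{n=0}^{N-1} \mid W^{k^*})$ for every $T$-subset $\Tc$, so $\sum_{n \in \Tc} h_n$ is a constant in $\Tc$. To conclude $h_{n'} = h_{n''}$ for any pair $n' \ne n''$, pick a $(T-1)$-subset $\Tc_0 \subseteq \{0, \ldots, N-1\} \setminus \{n', n''\}$ (possible because $N > T$) and compare $\Tc_1 := \Tc_0 \cup \{n'\}$ with $\Tc_2 := \Tc_0 \cup \{n''\}$, obtaining $h_{n'} = h_{n''}$ immediately. The only real obstacle is the mutual-independence step, since \ref{pro:mds} as stated does not explicitly condition on $\Fb$; I would handle this by invoking the MDS argument above, which works verbatim for each fixed key realization.
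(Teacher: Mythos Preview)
Your proof is correct and follows essentially the same route as the paper: use \ref{pro:3} to make the joint conditional entropy of the answers over any $T$-subset $\Tc$ independent of $\Tc$, use the per-realization MDS independence (your careful justification of \ref{pro:mds} at a fixed $\Fb=f$) to split that joint entropy into $\sum_{n\in\Tc} h_n$, and then swap one index between two $T$-subsets to conclude $h_{n'}=h_{n''}$. The only cosmetic difference is that you pass through the full joint $H((A_n^{(q_n)})_{n=0}^{N-1}\mid W^{k^*})$ as a common value, whereas the paper compares the two $T$-subset joint entropies directly; both arguments are equivalent.
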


This lemma also implies that we can define $H^{\breve{k}}:=H_0^{\breve{k}}=\ldots=H_{N-1}^{\breve{k}}$. The next two lemmas establish a critical property of, and relevant relations between, $H_n^k$'s and $I^k_n$'s. 

\vspace{0.1cm}
\begin{lemma}
\label{lemma:keyeq}
\begin{align}
L-(N-T)H^{\breve{k}}=\sum_{n=0}^{N-1}I_n^{\breve{k}}. \label{eqn:keyeq}
\end{align}
\end{lemma}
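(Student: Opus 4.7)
My strategy is to recast (\ref{eqn:keyeq}) as an identity for the sum of individual conditional entropies $\sum_n H(A_n^{[\breve{k}]}\mid \Fb=\breve{f})$, and then to pin down this sum by a matching pair of bounds. From the definition of $I_n^{\breve{k}}$ together with Lemma~\ref{lem:eq_ent} (which guarantees $H(A_n^{[\breve{k}]}\mid W^{\breve{k}},\Fb=\breve{f})=H^{\breve{k}}\log|\Xc|$ for every $n$), the claim is equivalent to
\[
\sum_{n=0}^{N-1} H(A_n^{[\breve{k}]}\mid\Fb=\breve{f}) \;=\; (L+TH^{\breve{k}})\log|\Xc|.
\]

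For the $\geq$ half, I would first evaluate the joint entropy of all $N$ answers. Because retrieval correctness makes $W^{\breve{k}}$ a function of $(A_{0:N-1}^{[\breve{k}]},\Fb)$, the chain rule gives $H(A_{0:N-1}^{[\breve{k}]}\mid\Fb=\breve{f}) = L\log|\Xc| + H(A_{0:N-1}^{[\breve{k}]}\mid W^{\breve{k}},\Fb=\breve{f})$. Property~\ref{pro:3} (applied with $\Jc=\{\breve{k}\}$ at the query realization induced by $\Fb=\breve{f}$, which has positive probability since $\Fb$ is uniform on $\Fc$ and $\breve{f}\in\Fc$) implies that, for any $\Tc$ with $|\Tc|=T$, $A_{\bar\Tc}^{[\breve{k}]}$ is determined by $(W^{\breve{k}},A_\Tc^{[\breve{k}]})$; Property~\ref{pro:mds} then factors $H(A_\Tc^{[\breve{k}]}\mid W^{\breve{k}},\Fb=\breve{f})=\sum_{n\in\Tc}H(A_n^{[\breve{k}]}\mid W^{\breve{k}},\Fb=\breve{f})=TH^{\breve{k}}\log|\Xc|$. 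Plain subadditivity $\sum_n H(A_n^{[\breve{k}]}\mid\Fb=\breve{f})\geq H(A_{0:N-1}^{[\breve{k}]}\mid\Fb=\breve{f})$ then supplies the desired lower bound $(L+TH^{\breve{k}})\log|\Xc|$.

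The $\leq$ half is the more delicate one: it amounts to mutual independence of $A_{0:N-1}^{[\breve{k}]}$ at the specific realization $\Fb=\breve{f}$. To justify it I would revisit the proof of Lemma~\ref{lem:ite}, whose $k=0$ step passes through exactly the subadditivity inequality $\sum_n H(A_n^{[\pi(0)]}\mid\Fb)\geq H(A_{0:N-1}^{[\pi(0)]}\mid\Fb)$. Since the code is capacity-achieving, Theorem~\ref{thm:cap} forces all inequalities in Lemma~\ref{lem:ite} to be equalities, so the averaged gap
\[
\sum_{f\in\Fc}\mathbf{Pr}(\Fb=f)\Big[\sum_{n=0}^{N-1} H(A_n^{[\breve{k}]}\mid\Fb=f) - H(A_{0:N-1}^{[\breve{k}]}\mid\Fb=f)\Big]
\]
must equal zero. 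Each bracketed quantity is non-negative by subadditivity, so every term in the support of $\Fb$ vanishes individually, and taking $f=\breve{f}$ produces the needed equality $\sum_n H(A_n^{[\breve{k}]}\mid\Fb=\breve{f}) = H(A_{0:N-1}^{[\breve{k}]}\mid\Fb=\breve{f})$.

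The main obstacle I expect is precisely this last argument---extracting a per-realization independence statement at $\breve{f}$ from the averaged equality condition supplied by the capacity converse---and it is what makes Lemma~\ref{lem:eq_ent} (used earlier to equate the individual $H(A_n^{[\breve{k}]}\mid W^{\breve{k}},\Fb=\breve{f})$'s across $n$) essential rather than cosmetic. Combining the matching bounds yields the displayed identity, and subtracting $NH^{\breve{k}}$ recovers $\sum_n I_n^{\breve{k}} = L-(N-T)H^{\breve{k}}$.
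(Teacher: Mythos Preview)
Your proposal is correct and follows essentially the same route as the paper. The paper's proof simply asserts that, for a capacity-achieving code, the equality of (\ref{eqn:conv0}) and (\ref{eqn:conv2}) from the converse chain in Lemma~\ref{lem:ite} must hold at the specific realization $\Fb=\breve{f}$, $k=0$, $\pi(0)=\breve{k}$; substituting $H(A_n^{[\breve{k}]}\mid\Fb=\breve{f})=(H^{\breve{k}}+I_n^{\breve{k}})\log|\Xc|$ and $H(A_n^{[\breve{k}]}\mid W^{\breve{k}},\Fb=\breve{f})=H^{\breve{k}}\log|\Xc|$ then yields the identity after one line of algebra. Your decomposition into a $\geq$ half (via \ref{pro:mds}, \ref{pro:3}, and retrieval correctness) and a $\leq$ half (via tightness of the subadditivity step (\ref{eqn:conv0})$\to$(\ref{eqn:conv1})) is just a more explicit unpacking of the same chain, and your careful justification of the per-realization passage---averaged equality plus termwise non-negativity forcing equality at every $f$ in the support---is precisely the point the paper compresses into the phrase ``should also hold when $\Fb=\breve{f}$.''
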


\vspace{0.1cm}
\begin{lemma} \label{lem:Dom}
For any $k\in \{0,1,\ldots,K-1\}$ and $n\in \{0,1,\ldots,N-1\}$, $H_n^k$ and $I_n^k$ are integers; moreover
\begin{align}
    & H_{n}^{\breve{k}} \geq \sum_{k \not= \breve{k}} I_n^k, \label{eqn:firstineq}
\end{align}
and when $K>s$, 
\begin{align}
H^{\breve{k}} \geq s\label{eqn:secondineq}
\end{align}
\end{lemma}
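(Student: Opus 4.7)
The lemma has three claims, which I would address in order.

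\emph{Integrality.} The coding pipeline is $\Xc$-linear and the messages are uniform and mutually independent on $\Xc^L$. Fixing $\Fb=\breve{f}$ turns every query $\phi_n(\breve{k},\breve{f})$ into a deterministic constant, so the answer becomes
\begin{align*}
A_n^{[\breve{k}]} \;=\; V_n\hat{G}_n^{(\phi_n(\breve{k},\breve{f}))} \;=\; W^{0:K-1}\bigl(\tilde{G}_n\hat{G}_n^{(\phi_n(\breve{k},\breve{f}))}\bigr),
\end{align*}
i.e., a deterministic $\Xc$-linear function of $W^{0:K-1}$. Consequently, for any $\mathcal{J}\subseteq\{0,\ldots,K-1\}$, the quantity $H(A_n^{[\breve{k}]}\mid W^{\mathcal{J}},\Fb=\breve{f})$ equals $\log|\Xc|$ times the $\Xc$-rank of the restriction of this map to the still-uniform messages indexed by the complement of $\mathcal{J}$, which is a nonnegative integer. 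Specializing $\mathcal{J}=\{k\}$ gives integrality of $H_n^k$; since $I_n^k$ is a difference of two such entropies divided by $\log|\Xc|$, it is also an integer.

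\emph{First inequality (\ref{eqn:firstineq}).} Fix any ordering $k_1,\ldots,k_{K-1}$ of $\{0,\ldots,K-1\}\setminus\{\breve{k}\}$. The chain rule gives
\begin{align*}
I(A_n^{[\breve{k}]};W^{k_1},\ldots,W^{k_{K-1}}\mid W^{\breve{k}},\Fb=\breve{f})
= \sum_{j=1}^{K-1}I(A_n^{[\breve{k}]};W^{k_j}\mid W^{\breve{k}},W^{k_1},\ldots,W^{k_{j-1}},\Fb=\breve{f}).
\end{align*}
Because $W^{k_j}$ is independent of $(W^{\breve{k}},W^{k_1},\ldots,W^{k_{j-1}},\Fb)$, the identity $I(X;Z\mid Y)-I(X;Z)=I(Y;Z\mid X)-I(Y;Z)$ reduces (under $Y\perp Z$) to $I(X;Z\mid Y)\geq I(X;Z)$, so each summand on the right is at least $I(A_n^{[\breve{k}]};W^{k_j}\mid\Fb=\breve{f})=I_n^{k_j}\log|\Xc|$. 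Upper-bounding the left side by $H(A_n^{[\breve{k}]}\mid W^{\breve{k}},\Fb=\breve{f})=H^{\breve{k}}\log|\Xc|$ yields (\ref{eqn:firstineq}).

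\emph{Second inequality (\ref{eqn:secondineq}), the main obstacle.} My plan is to argue by contradiction: suppose $H^{\breve{k}}\leq s-1$. Combined with (\ref{eqn:firstineq}) and the integrality just established, this forces, at every database $n$, at most $s-1$ of the quantities $\{I_n^k:k\neq\breve{k}\}$ to be strictly positive (each positive one being $\geq 1$). I would then push on two fronts. First, Lemma~\ref{lemma:keyeq} pins $\sum_n I_n^{\breve{k}}=L-(N-T)H^{\breve{k}}\geq L-(N-T)(s-1)$, which is large. Second, properties~{\it \ref{pro:mds}} and~{\it \ref{pro:3}} together with Lemma~\ref{lem:eq_ent} fix the joint interference entropy across all $N$ databases at $TH^{\breve{k}}\log|\Xc|\leq T(s-1)\log|\Xc|$. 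By privacy the query distribution at every database is invariant under relabeling the requested message, so the roles of the $K-1$ indices $k\neq\breve{k}$ are interchangeable; relaying this symmetry and using the tight equality in Lemma~\ref{lem:ite} for capacity-achieving codes, I would show that the per-message leakages $\sum_{k\neq\breve{k}}I_n^k$ cannot all be accommodated inside a joint interference subspace of dimension $T(s-1)$ once $K>s$. The hardest step is precisely this pigeonhole/counting argument: one must verify that no linear dependency between the contributions of the $K-1$ interfering messages can compress their joint leakage below the integer threshold forced by Part~1 while still preserving the MDS recoverability, the equality condition of Lemma~\ref{lem:ite}, and the identity (\ref{eqn:keyeq}). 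I expect this combinatorial/algebraic consistency check to be the bulk of the work.
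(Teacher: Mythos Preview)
Your treatment of integrality and of inequality~(\ref{eqn:firstineq}) is correct and essentially matches the paper (the paper bounds $H(A_n^{(q_n)})$ rather than conditioning on $W^{\breve{k}}$ first, but the chain-rule-plus-independence step is the same).

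For inequality~(\ref{eqn:secondineq}), however, your plan has a genuine gap. The symmetry you invoke does not exist: privacy says the \emph{marginal distribution} of $Q_n^{[k^*]}$ is the same for every $k^*$, but once you fix $\Fb=\breve{f}$ and $k^*=\breve{k}$, the answer $A_n^{(q_n)}$ is a single fixed linear map of $W^{0:K-1}$, and nothing forces it to treat the $K-1$ non-requested messages interchangeably. So the ``roles of the $K-1$ indices $k\neq\breve{k}$ are interchangeable'' step is unjustified, and the pigeonhole/subspace-packing argument you sketch has no clear way to close. You yourself flag this as ``the bulk of the work,'' but in fact it is the wrong direction.

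The paper's argument is much shorter and rests on one observation you are missing: for fixed $n$, the sum $I_n^k+H_n^k$ equals $H(A_n^{[\breve{k}]}\mid\Fb=\breve{f})/\log|\Xc|$, which does not depend on $k$. Since $\breve{k}$ was chosen to maximize $H_n^k$ (over $k$ and over $f$; privacy lets you replace the max over $f$ by a max over $q_n\in\Qc_n^*$, which is independent of the requested index), one gets $H_n^k\leq H^{\breve{k}}$ and hence $I_n^k\geq I_n^{\breve{k}}$ for \emph{every} $k$. Now split on Lemma~\ref{lemma:keyeq}: if $\sum_n I_n^{\breve{k}}=0$ then $Ms=rH^{\breve{k}}$ with $\gcd(r,s)=1$, so $s\mid H^{\breve{k}}$; otherwise some $n^*$ has $I_{n^*}^{\breve{k}}\geq 1$, hence $I_{n^*}^k\geq 1$ for all $k$, and (\ref{eqn:firstineq}) gives $H^{\breve{k}}=H_{n^*}^{\breve{k}}\geq K-1\geq s$ when $K>s$. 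No interference-subspace counting is needed.
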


The proofs of Lemma \ref{lem:eq_ent}-\ref{lem:Dom} are given in Appendix \ref{appendix:converse}. We are now ready to prove Theorem \ref{thm:OuterForMessageSize}. 
\begin{proof}[Proof of Theorem \ref{thm:OuterForMessageSize}]

When $K>s$,  by (\ref{eqn:secondineq}) in Lemma \ref{lem:Dom} and Lemma \ref{lemma:keyeq}, we have
\begin{eqnarray}
L - (N-T)s \geq L - (N-T)H^{\breve{k}}=\sum_{n=0}^{N-1}I_n^{\breve{k}}\geq 0.
\end{eqnarray}
Substituting $L=Mps$ and $(N-T)=pr$ into the left hand side leads to the conclusion $M\geq r$, implying that $L\geq MT\geq rT=\lcm(N-T,T)$.
\end{proof}

\subsection{Message Size Reduction for Small $K$}
\label{sec:small}

The following theorem confirms that for small $K$, it is in fact possible to construct a capacity-achieving code with an even smaller message size. 

\begin{theorem}\label{thm:4}
When $K= 2$ and $T \geq N-T$, the minimum message size of capacity-achieving codes is $T$.
\end{theorem}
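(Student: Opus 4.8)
The plan is to prove the two inequalities separately: the bound $L\ge T$ is essentially free, and the real content is an explicit capacity-achieving construction attaining $L=T$. For the converse, recall that in the system model every linear MDS-PIR code has $L=M\cdot T$ for a positive integer $M$, so $L\ge T$ with no further work; it therefore suffices to exhibit a capacity-achieving code with $M=1$.

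\emph{The construction.} Put $u:=N-T$, so the hypothesis $T\ge N-T$ reads $0\le u\le T$ (assume $N>T$; the case $N=T$ is degenerate). With $M=1$, fix any $(N,T)$-MDS code $\mathbb{C}$ over $\Xc$ and encode each of the two messages $W^0,W^1$ by $\mathbb{C}$ into a length-$N$ codeword $(V^k_0,\dots,V^k_{N-1})$, one symbol at each database. To retrieve $W^{k^*}$, the key $\Fb$ selects one of two query patterns together with all auxiliary randomness: pattern (A), chosen with probability $u/N$, queries a uniformly random $T$-subset of databases for the clean desired symbol $V^{k^*}_n$ while the other $N-T$ databases return nothing; pattern (B), chosen with probability $T/N$, draws a uniformly random ordered partition of $\{0,\dots,N-1\}$ into blocks of sizes $u$, $u$, $T-u$, and has block $1$ return the clean interference symbol $V^{1-k^*}_n$, block $2$ return $V^{k^*}_n+V^{1-k^*}_n$, and block $3$ return the pair $(V^{k^*}_n,V^{1-k^*}_n)$. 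The answer-length function assigns $0$ to ``nothing'', $1$ to the ``clean'' and ``mixed'' queries, and $2$ to the ``pair'' query, and the corresponding $2\times\ell_n$ answer-generating matrices are $\binom{1}{0}$, $\binom{0}{1}$, $\binom{1}{1}$, and the $2\times 2$ identity, respectively. (When $u=T$ block $3$ is empty; when $u=0$ blocks $1,2$ are empty, and then $N=T$.)

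\emph{Verification.} Three points must be checked. \textbf{Correctness:} pattern (A) follows by MDS-decoding $\mathbb{C}$ from the $T$ received symbols of $(V^{k^*}_0,\dots,V^{k^*}_{N-1})$; for pattern (B), blocks $1$ and $3$ jointly supply $u+(T-u)=T$ symbols of the interference codeword, which by the MDS property of $\mathbb{C}$ determines it entirely, so subtracting the recovered interference from the block-$2$ answers yields $V^{k^*}_n$ on block $2$ ($u$ symbols), and together with block $3$ ($T-u$ symbols) this gives $T$ symbols of the desired codeword, hence $W^{k^*}$; only the MDS property of $\mathbb{C}$ is used. \textbf{Privacy:} the $W^1$-scheme is the $W^0$-scheme with the two messages interchanged, so each database's query is relabeled by the involution that swaps ``ask for $V^0_n$'' with ``ask for $V^1_n$'' and fixes the ``mixed'', ``pair'', and ``nothing'' queries; it thus suffices that, under the $W^0$-scheme, each database is equally likely to be asked for $V^0_n$ and for $V^1_n$, and a direct count gives $\tfrac{u}{N}\cdot\tfrac{T}{N}$ for the former (only via pattern (A)) and $\tfrac{T}{N}\cdot\tfrac{u}{N}$ for the latter (only via block $1$ of pattern (B)), which are equal. \textbf{Rate:} the expected total download is $\tfrac{u}{N}\cdot T+\tfrac{T}{N}\big(u+u+2(T-u)\big)=\tfrac{u}{N}T+\tfrac{T}{N}\cdot 2T=\tfrac{T(N+T)}{N}$, which equals $L/C$ with $L=T$ and $C=(1+T/N)^{-1}$, so $R=C$. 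Combined with the converse, the minimum message size is exactly $T$.

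\emph{Main obstacle.} The delicate part is choosing the block sizes and mixing weights in pattern (B) so that three requirements hold at once: $T$ clean interference symbols are available for cancellation, $T$ clean desired symbols become available after cancellation, and -- after averaging against pattern (A) -- the per-database marginals of ``ask for $V^0_n$'' and ``ask for $V^1_n$'' coincide exactly. These three constraints force the sizes $(u,u,T-u)$ and weights $(u/N,\,T/N)$, and the nonnegativity $T-u\ge 0$ is precisely the place where $T\ge N-T$ is needed; it is also why this construction attains $L=T$ but no smaller value.
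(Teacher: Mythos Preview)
Your proof is correct and takes essentially the same approach as the paper: the converse is the trivial $L=MT\ge T$, and your two-pattern construction (pattern~(A) with probability $(N-T)/N$ and pattern~(B) with probability $T/N$, using blocks of sizes $N-T,\,N-T,\,2T-N$) is exactly the paper's scheme with the blocks relabeled. The paper merely asserts that correctness, privacy, and capacity are ``straightforward to verify,'' so your explicit verification of all three is a welcome addition rather than a deviation.
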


\begin{proof}[Proof of Theorem \ref{thm:4}]
Since $L=MT>0$, it is trivial to see that $L \geq T$, and thus it only remains to provide a construction of a capacity-achieving linear MDS-PIR code with such a message size.

Let database-$n$ store two symbols $V_n^0, V^1_n \in \Xc$, which are MDS-coded symbols of messages $W^0$ and $W^1$, respectively. When the user wishes to retrieve message $W^{k^*}$ where $k^* \in \{0,1\}$, two query strategies are used. 
\begin{itemize}
\item With probability $\frac{T}{N}$, randomly partition $N$ databases into $3$ disjoint sets $\Gc^{(0)}$, $\Gc^{(1)}$ and $\Gc^{(2)}$, with $|\Gc^{(0)}|=N-T$, $|\Gc^{(1)}|=2T-N$ and $|\Gc^{(2)}|=N-T$. The user requests $V^0_n \oplus V^1_n$ from databases in $\Gc^{(0)}$, $(V^0_n,V^1_n)$ from those in $\Gc^{(1)}$, and $V^{1-k^*}_n$ in $\Gc^{(2)}$. 
\item With probability $\frac{N-T}{N}$, randomly partition $N$ databases into $2$ disjoint sets $\Gc^{(3)}$ and $\Gc^{(4)}$, with $|\Gc^{(3)}|=T$ and  $|\Gc^{(4)}|=N-T$. The user requests $V^{k^*}_n$ from databases in $\Gc^{(3)}$, but no information from those in $\Gc^{(4)}$. 
\end{itemize}
It is straightforward to verify that the code is indeed correct, private, and capacity-achieving. 
\end{proof}

Theorem \ref{thm:4} provides a code construction with a small message size for the special case of $K=2$ and $T\geq N-T$, however, we suspect codes with small message sizes also exist for other parameters when $K$ is below the threshold,  but they may require certain more sophisticated combinatorial structure. A systematic approach to construct such codes and a converse result appear rather difficult to find.

\section{Conclusion}
\label{sec:con}

We proposed two code constructions for private information retrieval from MDS-coded databases with message size $\lcm(N-T,T)$, and show that this is the minimum message size for linear codes when $K$ is above a threshold. For smaller $K$ it is in fact possible to design PIR codes with an even smaller message size, which we show by a special example for $K=2$. This work generalizes our previous result on private information retrieval from replicated databases \cite{tian2018capacity} in a highly non-trivial manner. We expect the code constructions and the converse proof approach to be also applicable and fruitful in other privacy-preserving primitives. 

Independent of this work and inspired by our previous result \cite{tian2018capacity}, Zhu et al.  \cite{Zhu:MDSPIR19} discovered a different code construction, and also derived a lower bound on the message size similar to the one reported here. All three code constructions have the same message size, however, the two constructions we provided have a lower upload cost due to the queries being better compressed. It is worth noting that in our previous work \cite{tian2018capacity}, the proposed code was also shown to be optimal in terms of the upload cost, however, proving the proposed codes in the current work to be optimal appears more difficult due to the more complex dependence stipulated by the MDS code.

\appendix

\section{Proofs of Lemma \ref{lemma:TN}, Lemma \ref{lemma:Double}, and Lemma \ref{lemma:lowrate}}
\label{appendix:forward}
 \begin{proof}[Proof of Lemma \ref{lemma:TN}]
    Fix a particular $k^*$.  It is convenient to represent the $\tilde{Q}^{k^*,i}_{n}$ as a matrix given below
    \begin{align}
\begin{bmatrix}
    \tilde{Q}^{k^*,0}_{0}      & \tilde{Q}^{k^*,0}_{1} & \ldots & \tilde{Q}^{k^*,0}_{N-1} \\
    \tilde{Q}^{k^*,1}_{0}      & \tilde{Q}^{k^*,1}_{1} & \ldots & \tilde{Q}^{k^*,1}_{N-1} \\
     \vdots                              & \vdots                          & \vdots & \vdots \\
    \tilde{Q}^{k^*,s-1}_{0}      & \tilde{Q}^{k^*,s-1}_{1} & \ldots & \tilde{Q}^{k^*,s-1}_{N-1} 
\end{bmatrix}. \label{eqn:Qki}
    \end{align}
    Since $\tilde{Q}^{k^*,i}_{n}=(\Fb_{k^*} + i + n)_{s+r}$, $N = p\cdot(s+r)$, and $T=p\cdot s$,  for any $i$ and any given realization of key $\Fb$, there are exactly $T$ elements in $\tilde{Q}^{k^*,i}_{0:N-1}$, which is a row of (\ref{eqn:Qki}), that are greater than or equal to $r$.  This proves the first statement that $|\mathcal{T}_i|\geq T$. For the second statement, consider a fixed  $m\in \{0,1,\ldots,r+s-1\}$. It is clear that each row in the matrix (\ref{eqn:Qki}) has exactly $p$ positions being $m$. Therefore, there are a total of $p\cdot s=T$ elements in the matrix being $m$. Since for any $i\neq i'$ where $0\leq i,i'< s$, we have $\tilde{Q}^{k^*,i}_{n}\neq \tilde{Q}^{k^*,i'}_{n}$, due to $\tilde{Q}^{k^*,i}_{n}=(\Fb_{k^*} + i + n)_{s+r}$, these $T$ positions are all in different columns, implying $|\Nc_m| = T$.  
    \end{proof}

\begin{proof}[Proof of Lemma \ref{lemma:Double}]
For each fixed $i\in \{0,1,2,\ldots,s-1\}$, we have
\begin{align}
\Tcc_i=\left\{n\big{|}~P_{i, Q_{k^*,n}^{[k^*]}} = 0\right\}=\left\{n\big{|}~\bar{P}_{i, \bar{Q}_{k^*,n}^{[k^*]}} = 0\right\},
\end{align}
however, the vector
\begin{align}
(\bar{Q}_{k^*,0}^{[k^*]}, \bar{Q}_{k^*,1}^{[k^*]},\ldots,\bar{Q}_{k^*,n-1}^{[k^*]})\label{eqn:tildeQkstar}
\end{align}
is simply a permutation (in fact, a cyclic shift) of $p$-replicated vector of $(0,1,\ldots,r+s-1)$, and thus 
\begin{align}
\left(\bar{P}_{i, \bar{Q}_{k^*,0}^{[k^*]}},\bar{P}_{i, \bar{Q}_{k^*,1}^{[k^*]}},\ldots,\bar{P}_{i, \bar{Q}_{k^*,N-1}^{[k^*]}}\right) \label{eqn:tzeros}
\end{align}
is in fact is a permutation of $[\bar{P}_i,\bar{P}_i,\ldots,\bar{P}_i]$, i.e., a $p$-replicated version of the $i$-th row of $\bar{P}$. It is now clear that $|\Tcc_i|=T$, because each row of $\bar{P}$ has exactly $s$ zeros, and the replication gives a total of $ps=T$ zeros in the vector in (\ref{eqn:tzeros}).

To see that $|\Nc| = T$, let us focus on any single database-$n$ such that $|\mathcal{S}_n|>0$, i.e., $\bar{P}_{i, \bar{Q}_{k^*,n}^{[k^*]}}=1$ for some $i\in\{0,1,\ldots,s-1\}$. This condition is equivalent to $\bar{Q}_{k^*,n}^{[k^*]}<s$, and moreover, it also implies $|\mathcal{S}_n|=r$ because the vector
\begin{align}
\left(\bar{P}_{0, \bar{Q}_{k^*,n}^{[k^*]}},\bar{P}_{1, \bar{Q}_{k^*,n}^{[k^*]}},\ldots,\bar{P}_{s-1, \bar{Q}_{k^*,n}^{[k^*]}}\right)^T,
\end{align}
is the $j=\bar{Q}_{k^*,n}^{[k^*]}$-th column of the extended pattern matrix $\bar{P}$, and for any $j<s$, such a column has exact $r$ positions being $1$. It is also immediately clear that among the $N$ databases, there are a total of $sp=T$ of them with $\bar{Q}_{k^*,n}^{[k^*]}<s$, again because the vector (\ref{eqn:tildeQkstar})
is a permutation of the $p$-replicated vector of $(0,1,\ldots,r+s-1)$.
\end{proof}

\begin{proof}[Proof of Lemma \ref{lemma:lowrate}]
Define $\Tc := \left\{n| Q^{[k^*]}_{k^*,n}=r\right\}$, and notice that in Construction-B, $\Tc_i = \Tc$ for all $i=0,1,\ldots,s-1$. Moreover
\begin{align}
\Tc = \left\{n\big{|} Q^{[k^*]}_{k^*,n}=r\right\}=\left\{n\big{|} \bar{Q}^{[k^*]}_{k^*,n}\geq r\right\}.
\end{align}
Notice that the vector 
\begin{align}
\left(\bar{Q}^{[k^*]}_{k^*,0},\bar{Q}^{[k^*]}_{k^*,1},\ldots,\bar{Q}^{[k^*]}_{k^*,N-1}\right)
\end{align}
is a permutation of $p$-replicated vector $(0,1,\ldots,r+s-1)$, and thus it has exactly $ps=T$ items that are greater or equal to $r$. This directly implies $|\Tc|=T$.

By definition, for any $ \tilde{Q}_{n}^{k^*,i}=m$ where $m<r$, to hold, we must have  $(\Fb_{k^*}+n)_{s+r}<r$. It is also clear that $\tilde{Q}^{k^*,i}_{n} = ((\Fb_{k^*}+n)_{s+r}+i)_r$. For any $m\in \{0,1,\ldots,r-1\}$, there are in fact exactly $ps=T$ pairs of $(n,i)$'s such that $\tilde{Q}^{k^*,i}_{n}=m$. This can be seen as follows: the vector 
\begin{align}
(\Fb_{k^*}+0,\Fb_{k^*}+1,\ldots,\Fb_{k^*}+N-1)_{s+r}
\end{align}
has $rp=(N-T)$ items less than $r$, and these items are a permutation of $p$-replicated vector $(0,1,\ldots,r-1)$. The symmetry of these values indeed implies that there are $(N-T)s/r=T$ such $(n,i)$ pairs for each $m\in \{0,1,\ldots,r-1\}$. We next show that such pairs do not have any common $n$. To see this, suppose there are two distinct pairs $(n,i)$ and $(n,i')$, which satisfy 
\begin{align}
((\Fb_{k^*}+n)_{s+r}+i)_r=((\Fb_{k^*}+n)_{s+r}+i')_r=m,
\end{align}
for certain $m\in \{0,1,\ldots,r-1\}$. 
However, notice that %by assumption $(F_{k^*}+n)_{s+r}<r$, and since 
$0\leq i,i'<s\leq r$, we therefore must have $i=i'$, which contradicts our supposition. It follows that indeed $|\Nc_m|=T$ for any $m \in \{0,1,\ldots, r-1\}$. 
\end{proof}

\section{Proofs of Lemmas \ref{lem:ite}-\ref{lem:Dom}}
\label{appendix:converse}
\begin{proof}[Proof of Lemma \ref{lem:ite}]
\begin{align}
&N\left[ \sum_{n=0}^{N-1} H(A_{n}^{[\pi(k)]} \mid W^{\pi(0:k-1)}, \Fb) - L\log|\Xc|\right] \label{eqn:conv0}\\
&\geq N\left[ H(A_{0:N-1}^{[\pi(k)]} \mid W^{\pi(0:k-1)}, \Fb) - L\log|\Xc|\right] \label{eqn:conv1}\\
%&=&N{\Big [} H(A_{0:N-1}^{[\pi(k)]} \mid W^{\pi(0:k-1)}, \Fb) \nonumber\\
%&&\quad- {\Big (} H(W^{\pi(k)} \mid W^{\pi(0:k-1)} , \Fb) \nonumber \\
%&&\quad- H(W^{\pi(k)} \mid A_{0:N-1}^{[\pi(k)]}, W^{\pi(0:k-1)}, \Fb) {\Big)} {\Big ]} \\
&= N {\Big [} H(A_{0:N-1}^{[\pi(k)]} \mid W^{\pi(0:k-1)}, \Fb) \notag \\
&\quad\quad - I (W^{\pi(k)} ; A_{0:N-1}^{[\pi(k)]} \mid W^{\pi(0:k-1)}, \Fb) {\Big ]} \\
&= N H(A_{0:N-1}^{[\pi(k)]} \mid W^{\pi(0:k)}, \Fb) \\
&\overset{(b)}{\geq} \sum_{n=0}^{N-1} H(A_{\rho((n:n+T-1)_N)}^{[\pi(k)]} \mid W^{\pi(0:k)}, \Fb) \label{eqn:conv_rho} \\
&\overset{(c)}{=}  \sum_{n=0}^{N-1} \sum_{s=0}^{T-1} H(A^{[\pi(k)]}_{(n+s)_N} | W^{\pi(0:k)}, \Fb) \label{eqn:conv2} \\
&= T \left[ \sum_{n=0}^{N-1} H(A_{n}^{[\pi(k)]} \mid W^{\pi(0:k)}, \Fb)\right]  \\
&= T \left[ \sum_{n=0}^{N-1} H(A_{n}^{[\pi(k+1)]} \mid W^{\pi(0:k)}, \Fb)\right] \label{eqn:conv3},
%&\geq T H(A_{0:N-1}^{[\pi(k+1)]} \mid W^{\pi(0:k)}, \Fb) \label{eqn:conv4},
\end{align}
where $\rho: \{0,1,\ldots,N-1\} \rightarrow \{0,1,\ldots,N-1\}$ in inequality (\ref{eqn:conv_rho}) is a permutation over $\{0,1,\ldots,N-1\}$.
Equality (\ref{eqn:conv3}) is due to the privacy constraint, which is
\begin{eqnarray}
&&H(A_n^{[\pi(k)]} \mid W^{\pi(0:k)}, \Fb)\\
&=& \sum_{f \in \Fc} \mathbf{Pr}(\Fb=f)H(A_n^{[\pi(k)]}\mid W^{\pi(0:k)}, \Fb=f)\\
&=&\sum_{q_n \in \Qc_n} \mathbf{Pr}(Q_n=q_n)H(A_n^{(q_n)} \mid W^{\pi(0:k)})\\
&=&H(A^{[\pi(k+1)]}_n \mid W^{\pi(0:k)}, \Fb)
\end{eqnarray}

For the equality $(b)$ to hold for an MDS-PIR code, i.e.,
\begin{align*}
H(A^{[\pi(k)]}_{0:N-1}\mid W^{\pi(0:k)}, \Fb) = H(A^{[\pi(k)]}_{\rho((n:n+T-1))_N} \mid W^{\pi(0:k)}, \Fb).
\end{align*}
the equality must hold for each $\Fb=f$, which concludes the proof of property~\ref{pro:3}.
\end{proof}

\begin{proof}[Proof of Lemma \ref{lem:eq_ent}]
Let $\Tc',\Tc'' \subseteq \{0,1,\ldots,N-1\}$ such that $|\Tc'|=|\Tc''| = T$,  and $n'\in \Tc'$, $\Tc'' = \Tc'\setminus \{n'\} \cup \{n''\}$. 
By \ref{pro:3}, any capacity-achieving linear code must have 
\begin{align}
&H(A^{[k^*]}_{\Tc'} \mid A^{[k^*]}_{\Tc''}, W^{k^*}, \Fb = f)\nonumber\\
&=H(A^{[k^*]}_{\Tc''} \mid A^{[k^*]}_{\Tc'}, W^{k^*},\Fb = f) =0,
\end{align}
which implies 
\begin{eqnarray}
H(A^{[k^*]}_{\Tc'} \mid W^{k^*},\Fb = f)=  H(A^{[k^*]}_{\Tc''} \mid W^{k^*},\Fb = f).
\end{eqnarray}
Invoking \ref{pro:mds} leads to
\begin{align*}
\sum_{n \in \Tc'} H(A_n^{[k^*]} \mid W^{k^*}, \Fb = f)=\sum_{n \in \Tc''} H(A_n^{[k^*]} \mid W^{k^*}, \Fb = f),
\end{align*}
which further implies that 
\begin{align*}
H(A_{n'}^{[k^*]} \mid W^{k^*}, \Fb=f) = H(A_{n''}^{[k^*]} \mid W^{k^*}, \Fb=f).
\end{align*}
This completes the proof. 
\end{proof}

\begin{proof}[Proof of Lemma \ref{lemma:keyeq}]
For any capacity-achieving code, (\ref{eqn:conv0}) must equal to (\ref{eqn:conv2}). The equality should also hold when $\Fb=\breve{f}$, $k=0$ and $\pi(0)=\breve{k}$. Substituting the definition of $H_n^{\breve{k}}$ and $I_n^{\breve{k}}$ directly, we have
\begin{align*}
    N \left[\sum_{n=0}^{N-1}\left(H^{\breve{k}} + I_n^{\breve{k}}\right) -L \right] &= \sum_{n=0}^{N-1}\sum_{s=0}^{T-1} H^{\breve{k}},
\end{align*}
which simplifies to the desired equality.
%\begin{align*}
%    L - (N-T)H^{\breve{k}} & = \sum_{n=0}^{N-1} I_n^{\breve{k}}
%\end{align*}
%which yields the result. 
\end{proof}

\begin{proof}[Proof of Lemma \ref{lem:Dom}]
Let $q_n = \phi_n(\breve{k},\breve{f})$. The linearity of the code implies that
\begin{align}
H_n^k= \frac{H(A_n^{[\breve{k}]} \mid W^k, \Fb=\breve{f})}{\log|\Xc|}=\frac{H(A_{n}^{(q_n)}|W^k)}{\log|\Xc|}
\end{align}
is an integer. Notice that $I_n^k+H_n^k$ is also an integer by the same argument, from which we conclude that $I_n^k$ is also an integer. 

To see the inequality in (\ref{eqn:firstineq}), we write
\begin{align}
&H(A_n^{(q_n)})\geq I(A_n^{(q_n)}; W^{0:K-1}) \notag \\
&\qquad=\sum_{k=0}^{K-1} I(A_{n}^{(q_n)} ; W^k|W^{0:k-1})\notag\\
&\qquad\stackrel{(a)}{\geq} \sum_{k=0}^{K-1} I(A_{n}^{(q_n)} ; W^k)= \log|\Xc| \sum_{k=0}^{K-1} I_{n}^{k},
\end{align}
where $(a)$ is because
\begin{align}
&I(A_{n}^{(q_n)} ; W^k|W^{0:k-1})\notag\\
&=H( W^k|W^{0:k-1})-H( W^k|W^{0:k-1},A_{n}^{(q_n)} )\notag\\
&\geq H( W^k)-H( W^k|A_{n}^{(q_n)} )\notag\\
&=I(A_{n}^{(q_n)} ; W^k).
\end{align}
It follows that
\begin{eqnarray}
H^{\breve{k}}_n \log|\Xc|&=&H(A_n^{(q_n)}) - I_{n}^{\breve{k}}\log|\Xc|\notag\\
&\geq& \log|\Xc| \sum_{k\not = \breve{k}}I_{n}^{k}.
\end{eqnarray}

Next we prove the inequality (\ref{eqn:secondineq}).  First define the following query support set at database-$n$
\begin{eqnarray}
\Qc^*_n = \{q_n| q_n=\phi_n(k=0, f) \mbox{ for some } f\in \Fc \},
\end{eqnarray}
and since the code is privacy preserving, it is also the query support set for all other $k=1,\ldots,K-1$. We can then write
\begin{align}
&H^{\breve{k}}\log|\Xc|=\max_{k=0,1,\ldots,K-1} \max_{f \in \Fc} H(A_n^{[k]} \mid W^{k}, \Fb=f)\notag\\
&=\max_{k=0,1,\ldots,K-1} \max_{q_n \in \Qc^*_n} H(A_n^{(q_n)} \mid W^{k}) \notag\\
&\geq\max_{q_n \in \Qc^*_n} H(A_n^{(q_n)} \mid W^k)\geq H(A_n^{[\breve{k}]} \mid W^k, \Fb=\breve{f})\notag\\
&=H_{n}^{k}\log|\Xc| \label{eqn:dom}
\end{align}
for any $n=0,1,\ldots,N-1$ and $k=0,1,\ldots,K-1$.

Plugging $T=ps$, $N-T=pr$, and $L=MT$ into (\ref{eqn:keyeq}) gives
\begin{align}
\sum_{n=0}^{N-1} I_n^{\breve{k}}=L-(N-T)H^{\breve{k}}= p [Ms-rH^{\breve{k}}].
\end{align}
Since $I_{n}^{\breve{k}} \geq 0$, the left hand side is strictly positive unless $(N-T)$ is a factor of $L$. If the left hand side is zero, then $L\geq \lcm(T,N-T)$, implying $H^{\breve{k}}\geq s$. 
If $(N-T)$ is not a factor of $L$, implying that the left hand side is indeed strictly positive, 
then there exists at least one $n^* \in \{0,1,\ldots,N-1\}$ such that $I_{n^*}^{\breve{k}} \geq 1$. 
For any $k=0,1,\ldots,K-1$,
\begin{eqnarray}
I_{n^*}^k + H_{n^*}^k=H(A_{n^*}^{[\breve{k}]} \mid \Fb=\breve{f})/\log|\Xc|,\label{eqn:sumconstant}
\end{eqnarray}
however, the right hand side of (\ref{eqn:sumconstant}) is independent of $k$, and thus $I_{n^*}^k + H_{n^*}^k$ is in fact a constant. Furthermore, $H^{\breve{k}}\geq H_{n^*}^k$ by (\ref{eqn:dom}), it follows that
$I_{n^*}^k \geq I_{n^*}^{\breve{k}}$ for all $k=0,1,\ldots,K-1$. By (\ref{eqn:firstineq}), we can write
\begin{align}
H_{n^*}^{\breve{k}}\geq \sum_{k \not= k^*} I_{n^*}^k \geq (K-1) I_{n^*}^{\breve{k}}\geq (K-1)\geq s,
\end{align}
when $K>s$. 
\end{proof}

\bibliographystyle{IEEEtran}

\begin{thebibliography}{10}
\providecommand{\url}[1]{#1}
\csname url@samestyle\endcsname
\providecommand{\newblock}{\relax}
\providecommand{\bibinfo}[2]{#2}
\providecommand{\BIBentrySTDinterwordspacing}{\spaceskip=0pt\relax}
\providecommand{\BIBentryALTinterwordstretchfactor}{4}
\providecommand{\BIBentryALTinterwordspacing}{\spaceskip=\fontdimen2\font plus
\BIBentryALTinterwordstretchfactor\fontdimen3\font minus
  \fontdimen4\font\relax}
\providecommand{\BIBforeignlanguage}[2]{{%
\expandafter\ifx\csname l@#1\endcsname\relax
\typeout{** WARNING: IEEEtran.bst: No hyphenation pattern has been}%
\typeout{** loaded for the language `#1'. Using the pattern for}%
\typeout{** the default language instead.}%
\else
\language=\csname l@#1\endcsname
\fi
#2}}
\providecommand{\BIBdecl}{\relax}
\BIBdecl

\bibitem{PIRfirstjournal}
B.~Chor, E.~Kushilevitz, O.~Goldreich, and M.~Sudan, ``Private information
  retrieval,'' \emph{Journal of the ACM (JACM)}, vol.~45, no.~6, pp. 965--981,
  Nov. 1998.

\bibitem{Sun_Jafar_PIR}
H.~Sun and S.~A. Jafar, ``The capacity of private information retrieval,''
  \emph{IEEE Transactions on Information Theory}, vol.~63, no.~7, pp.
  4075--4088, Jul. 2017.

\bibitem{shah2014one}
N.~B. Shah, K.~Rashmi, and K.~Ramchandran, ``One extra bit of download ensures
  perfectly private information retrieval,'' in \emph{Information Theory
  (ISIT), 2014 IEEE International Symposium on}, 2014, pp. 856--860.

\bibitem{banawan2018capacity}
K.~Banawan and S.~Ulukus, ``The capacity of private information retrieval from
  coded databases,'' \emph{IEEE Transactions on Information Theory}, vol.~64,
  no.~3, pp. 1945--1956, Mar. 2018.

\bibitem{Tajeddine_Rouayheb}
R.~Tajeddine, O.~W. Gnilke, and S.~El~Rouayheb, ``Private information retrieval
  from {MDS} coded data in distributed storage systems,'' \emph{IEEE
  Transactions on Information Theory}, vol.~64, no.~11, pp. 7081--7093, Nov.
  2018.

\bibitem{jingke2017subScienceChina}
J.~Xu and Z.~Zhang, ``On sub-packetization and access number of
  capacity-achieving {PIR} schemes for {MDS} coded non-colluding databases,''
  \emph{SCIENCE CHINA Information Sciences}, vol.~61, no.~7, pp.
  100\,306:1--100\,306:16, 2018.

\bibitem{Lin19weakly}
H.-Y. Lin, S.~Kumar, E.~Rosnes, A.~G. i~Amat, and E.~Yaakobi, ``Weakly-private
  information retrieval,'' in \emph{2019 Proceedings of IEEE International
  Symposium on Information Theory (ISIT)}, 2019, pp. 1257--1261.

\bibitem{tian2018capacity}
C.~Tian, H.~Sun, and J.~Chen, ``Capacity-achieving private information
  retrieval codes with optimal message size and upload cost,'' \emph{IEEE
  Transactions on Information Theory}, vol.~65, no.~11, pp. 7613--7627, Nov.
  2019.

\bibitem{banawan2019asymmetry}
K.~Banawan and S.~Ulukus, ``Asymmetry hurts: Private information retrieval
  under asymmetric traffic constraints,'' \emph{IEEE Transactions on
  Information Theory}, vol.~65, no.~11, pp. 7628--7645, Nov. 2019.

\bibitem{sun2017optimal}
H.~Sun and S.~A. Jafar, ``Optimal download cost of private information
  retrieval for arbitrary message length,'' \emph{IEEE Transactions on
  Information Forensics and Security}, vol.~12, no.~12, pp. 2920--2932, Dec.
  2017.

\bibitem{Tian:16Computer}
C.~Tian, ``Symmetry, outer bounds, and code constructions: A computer-aided
  investigation on the fundamental limits of caching,'' \emph{MDPI Entropy},
  vol.~20, no.~8, pp. 603.1--43, Aug. 2018.

\bibitem{LinCostello:book}
S.~Lin and D.~J. Costello, \emph{Error control coding}, 2nd~ed.\hskip 1em plus
  0.5em minus 0.4em\relax Prentice Hall, 2004.

\bibitem{Zhu:MDSPIR19}
J.~Zhu, Q.~Yan, C.~Qi, and X.~Tang, ``A new capacity-achieving private
  information retrieval scheme with (almost) optimal file length for coded
  servers,'' \emph{arXiv preprint arXiv:1903.06921}, 2019.

\end{thebibliography}
% Generated by IEEEtran.bst, version: 1.14 (2015/08/26)
 \newcommand{\noop}[1]{}

\end{document}